\def\BibTeX{{\rm B\kern-.05em{\sc i\kern-.025em b}\kern-.08em
    T\kern-.1667em\lower.7ex\hbox{E}\kern-.125emX}}
\newtheorem{lemma}{Lemma}
\newtheorem{proposition}{Proposition}
\newtheoremstyle{noparens}
  {}{}
  {\itshape}{}
  {\bfseries}{.}
  { }
  {\thmname{#1}\thmnumber{ #2}\mdseries\thmnote{ #3}}
\theoremstyle{noparens}
\acrodef{crb}[CRB]{Cram$ {\rm \acute{e}} $r-Rao lower bound}
\acrodef{snr}[SNR]{signal-to-noise rate}
\acrodef{srl}[SRL]{statistical resolution limit}
\acrodef{music}[MUSIC]{Multiple Signal Classification}
\acrodef{fim}[FIM]{Fisher information matrix}
\acrodef{rmse}[RMSE]{root mean square error}
\begin{document}

\title{Fundamental Limit of Angular Resolution in Partly Calibrated Arrays with Position Errors}

\author{Guangbin Zhang, Yan Wang, Tianyao Huang and Yonina C. Eldar
\thanks{\quad 
This work was supported by the National Natural Science Foundation of China under Grants 62271053, 62331007, 62171259, 62401063.
G. Zhang is with Yangtze Delta Region Academy of Beijing Institute of Technology, Jiaxing, Zhejiang, China (e-mail:zgb02@qq.com).
Y. Wang is with Radar Research Laboratory, School of Information and Electronics, Beijing Institute of Technology, Beijing, China (e-mail:yan\_wang@bit.edu.cn).
T. Huang is with the School of Computer and Communication Engineering, University of Science and Technology Beijing, Beijing, China (e-mail:huangtianyao@ustb.edu.cn).
Yonina C. Eldar is with the Department of Computer Science and Applied Mathematics, Weizmann Institute of Science, Israel (e-mail:yonina.eldar@weizmann.ac.il). 
Y. Wang is the corresponding author.}
} 


\maketitle

\begin{abstract}

We consider high angular resolution detection using distributed mobile platforms implemented with so-called \emph{partly calibrated arrays}, where position errors between subarrays exist and the counterparts within each subarray are ideally calibrated. 
Since position errors between antenna arrays affect the coherent processing of measurements from these arrays, it is commonly believed that its angular resolution is influenced. 
A key question is whether and how much the angular resolution of partly calibrated arrays is affected by the position errors, in comparison with ideally calibrated arrays.  
To address this fundamental problem, we theoretically illustrate that partly calibrated arrays approximately achieve high angular resolution. 
Our analysis uses a special characteristic of Cram$ {\rm \acute{e}} $r-Rao lower bound (CRB) w.r.t. the source separation: When the source separation increases, the CRB first declines rapidly, then plateaus out, and the turning point is close to the angular resolution limit.
This means that the turning point of CRB can be used to indicate angular resolution.
We then theoretically analyze the declining and plateau phases of CRB, and explain that the turning point of CRB in partly calibrated arrays is close to the angular resolution limit of distributed arrays without errors, demonstrating high resolution ability.
This work thus provides a theoretical guarantee for the high-resolution performance of distributed antenna arrays in mobile platforms.

\end{abstract}

\begin{IEEEkeywords}
Partly calibrated arrays, angular resolution, the declining and plateau phases of CRB, the turning point of CRB.
\end{IEEEkeywords}

\IEEEpeerreviewmaketitle

\section{Introduction}

High angular resolution is desired to achieve precise target detection in applications such as radar, sonar, and astronomy.
Antenna array is a common tool for direction finding.
Since the angular resolution of an antenna array is inversely proportional to the aperture of the array \cite{van2004optimum}, large aperture arrays are used to achieve high resolution.
However, for a single array, large aperture means high system complexity, high cost and poor mobility, which restricts the scope of application.

A promising solution is to instead use multiple distributed arrays of small apertures and fuse their measurements coherently, known as `distributed arrays' \cite{Heimiller83}. 
Ideally, distributed arrays achieve high angular resolution inversely proportional to the whole array aperture with lower system complexity.
However, for distributed arrays loaded on mobile platforms such as unmanned aerial vehicles (UAVs), which are common in the low-altitude economy, drone swarms, and similar applications, it is difficult to locate the arrays accurately in real time due to the mobility of the platforms, and position errors between subarrays are unavoidable.
Such distributed arrays with unknown inter-subarray position errors (and with no or calibrated intra-subarray position errors) are called {partly calibrated arrays \cite{PCA96,RARE02,exRARE04,Lei10,Parvazi11,Steffens14,COBRAS18,Non-Coherent18,zhang2023decentralized,zhang2023fundamental,su2023direction,zgb23}.
In contrast, distributed arrays with exactly known array positions are called fully calibrated arrays \cite{stoica1989music}.
A main concern is whether partly calibrated arrays could achieve the same (or similar) angular resolution as fully calibrated arrays under the negative influence of unknown errors.
Note that partly calibrated arrays in this paper refer to distributed arrays with position errors between subarrays, and do not include other types of errors such as gain and phase errors \cite{6621815,liao2011direction} or clock synchronization errors \cite{9994246}.

Though the performance of direction finding with partly calibrated arrays is intuitively inferior to the counterpart with error-free arrays, some existing algorithms experimentally show that under certain assumptions the former achieve similar angular resolution as the latter.
These assumptions include, for example, high \ac{snr}, enough snapshots and uncorrelated source signals \cite{RARE02,exRARE04}, and isotropic linear arrays with the same topology \cite{zgb23}. 
However, the high-resolution ability depends on specific algorithms and their assumptions, and the scope of application is limited.
Whether errors seriously degrade the resolution or not is still an open problem in more general scenarios, and is hard to solve only from the perspective of algorithm design. 



Inspired by the positive empirical results, we aim to theoretically analyze the fundamental limit of angular resolution in partly calibrated arrays. 
The first step is to quantify angular resolution for general arrays. 
However, the typical Rayleigh criterion, where the central maximum of one source’s beamforming result coincides with the first minimum of the other, struggles to effectively explain the resolution of distributed arrays with errors. 
This is because the Rayleigh criterion is under the assumption of error-free scenarios. 
When errors are present, the beamforming results are significantly affected by the unknown errors and exhibit irregular behavior, making it difficult to characterize the resolution performance as in the ideal cases.
This motivates us to explore alternative resolution metrics.

To this end, the \ac{srl} was proposed in this context and studied extensively \cite{cox1973resolving,sharman1985resolving,el2011statistical,2014Ren,sun2017statistical,thameri2018statistical,Lee92,Smith_resolution05,abeida2022refinement}.
\ac{srl} is empirically defined as the minimum separation between the parameter of interest that makes two closely spaced signals distinguishable.  
Several criteria are introduced to describe \ac{srl}, which are mainly divided into spectrum based \cite{cox1973resolving,sharman1985resolving}, detection based \cite{el2011statistical,2014Ren,sun2017statistical,thameri2018statistical} and estimation based \cite{Lee92,Smith_resolution05,abeida2022refinement,wan2023} criteria.
However, the spectrum based resolution criterion is not perfectly suitable for partly calibrated arrays with position errors, since the distortion of spectral peaks caused by errors can significantly complicates the analysis.
Otherwise, spectrum based and detection based criteria depend on specific estimation algorithms and hypothesis testing strategies, respectively.
Estimation based criteria use estimation accuracy limit, \ac{crb}, to characterize the resolution limit, which is independent of specific algorithms or detection strategies.
However, typical \ac{crb} based criteria rely on high \ac{snr} and no modeling or signal mismatch \cite{Smith_resolution05}, which are not directly applicable in distributed arrays with position errors.
Recently, a resolution criterion based on Gaussian process was proposed \cite{usmani2024longitudinal}. However, it primarily targets optical three-dimensional imaging, making it difficult to apply directly to radar signal processing.

The main factor affecting the resolution is array aperture.
To better quantify the angular resolution in a way less dependent on \ac{snr} and the error-free assumptions, we exploit a characteristic of \ac{crb} with respect to angular separation. 
Particularly, the \ac{crb} curve first shows a rapid decline along with the increase of the source separation $ \Delta\omega $ from zero to a turning point. After that turning point, the curve displays minor fluctuations, and soon converges to some fixed level. 
The turning point is close to the angular resolution limit $ \Omega $ \cite{Smith_resolution05}. 
We explain the reason of this phenomenon as follows: 1) when sources are closely placed and are unresolvable ($ \Delta\omega \leqslant\Omega $), the estimation accuracy is poor and the \ac{crb} is high; in this region, as the separation increases, the \ac{crb} declines, indicating the significant improvement on the estimation accuracy; when the sources becomes resolvable ($ \Delta\omega\geqslant\Omega $), the \ac{crb} tends to be stable.
This shows the rationality of using the \ac{crb} curve's turning point as an indicator of angular resolution. 

We then show that in partly calibrated arrays the \ac{crb} curve's turning point is inversely proportional to the whole aperture of the array by analyzing the behavior of \ac{crb} in two regions:
in the unresolvable part ($ \Delta\omega\ll\Omega $), we show that the \ac{crb} declines polynomially 
with respect to $ \Delta\omega$ and theoretically calculate the decline speed;
in the resolvable part ($ \Delta\omega\geqslant\Omega $), we illustrate that the partial derivative of \ac{crb} with respect to $ \Delta\omega $ is close to zero. 
These two behaviours in the declining and plateau phases confirm the existence and location of the turning point, and consequently indicate angular resolution. 


Our main contributions are summarized as follows:
\begin{itemize}
\item We propose a new criterion that uses the turning point of \ac{crb} to indicate angular resolution;
\item We use the proposed criterion to demonstrate that partly calibrated arrays achieve high angular resolution similar as fully calibrated arrays, both inversely proportional to the whole array aperture.
\end{itemize}
The above conclusion provides an important theoretical guarantee for the high-resolution performance of distributed antenna arrays in mobile platforms.
To our best knowledge, this work is the first to theoretically explain the high resolution performance limit of distributed arrays in the presence of position errors between subarrays.

The rest of this paper is organized as follows.
Section~\ref{sec:CRB_resolution} reviews the related works.
Section~\ref{sec:signal} provides the signal models of fully and partly calibrated arrays. 
Section~\ref{sec:indicate} introduces our main contributions of indicating the angular resolution of partly calibrated arrays using the proposed criterion. 
In Section~\ref{sec:analysis}, we detail the proofs of our main contributions.
Numerical simulations are given in Section~\ref{sec:simulation} to verify the analysis, followed by a conclusion in Section~\ref{sec:conclusion}.

We use $ \mathbb{Z} $, $ \mathbb{R} $ and $ \mathbb{C} $ to denote the sets of integer, real and complex numbers, respectively.
The expectation of a random variable $ \cdot $ is written as $ \mathbb{E}[\cdot] $.
Uppercase boldface letters denote matrices (e.g. $ \bm{A} $) and lowercase boldface letters denote vectors (e.g. $ \bm{a} $). The $(m,n)$-th element of a matrix $ \bm{A} $ is denoted by $ [\bm{A}]_{m,n} $, and the $n$-th column is represented by $ [\bm{A}]_{n} $.  
We use $ {\rm trace}(\cdot) $ to indicate the trace of a matrix and $ {\rm diag}(\bm{a}) $ to represent a matrix with diagonal elements given by $ \bm{a} $.
The  conjugate, transpose, and  conjugate transpose operators are denoted by $ ^*,^T,^H $, respectively. 
The amplitude of a scalar and the $ l_2 $ norm of a vector are represented by $ |\cdot| $ and $ \Vert \cdot \Vert_2 $, respectively.
The cardinality of a set $ \mathcal{N} $ is represented by $ |\mathcal{N}| $.
The Hadamard product is written as $ \odot $, the semi-definite operator is denoted by $ \succcurlyeq $, and the definition symbol is defined as $ \equiv $. 
We denote the imaginary unit for complex numbers by $ j=\sqrt{-1} $.


\section{CRB as a resolution metric}
\label{sec:CRB_resolution}

In this section, we review the related works of \ac{crb} based resolution criteria and analyze their application in indicating angular resolution of partly calibrated arrays.

To clarify the resolvability of closely spaced signals in a given scenario, \ac{srl} is an efficient typical tool that received wide attention \cite{cox1973resolving,sharman1985resolving,el2011statistical,2014Ren,sun2017statistical,thameri2018statistical,Lee92,Smith_resolution05,abeida2022refinement,yau1992worst,clark1995resolvability,liu2007statistical}. 
Particularly, \ac{srl} is defined as \emph{the minimum distance between two closely spaced signals embedded in an additive noise that allows a correct resolvability/parameter estimation} \cite{el2011statistical}.
One of the main techniques to describe and derive \ac{srl} is based on estimation accuracy since the resolved signals intuitively have higher estimation accuracy than the corresponding unresolved ones.
\ac{crb} is widely used to characterize the upper bound of estimation accuracy, and therefore it is natural to combine \ac{srl} and \ac{crb} to explain the resolution limit. 

Particularly, assume that there are two signals with frequencies $ \bm{f}=[f_1,f_2]^T $ mixed together and denote the resolution limit of $ \bm{f} $ by $ \Omega_f $. 
Let the \ac{crb} of $ \bm{f} $ be written as
\begin{align}
\label{equ:CRB_f}
{\rm CRB}(\bm{f})=\begin{bmatrix}
{\rm CRB}(f_1) & {\rm CRB}(f_1,f_2) \\
{\rm CRB}(f_1,f_2) & {\rm CRB}(f_2)
\end{bmatrix}.
\end{align}
The average \ac{crb} of frequency $ \bm{f} $ is defined as 
\begin{align}
\label{equ:average_CRB}
\overline{\rm CRB}(\bm{f})=\frac{1}{2}\big({\rm CRB}(f_1)+{\rm CRB}(f_2)\big).
\end{align}
Unless otherwise emphasized, the \ac{crb}s mentioned after this section refer to the average \ac{crb} (omitting the horizontal bar).

Existing related works construct the correlation between resolution limit and \ac{crb} by some criteria.
In \cite{Lee92}, Lee criterion states that: \emph{two signals are resolvable w.r.t. the frequencies if the maximum standard deviation of each frequency estimate is less than half the difference between $ f_1 $ and $ f_2 $}, shown as $ \Omega_f=2{\rm max}\left\{\sqrt{{\rm CRB}(f_1)},\sqrt{{\rm CRB}(f_2)}\right\} $. 
This criterion ignores the coupling between the parameters $ f_1,f_2 $, i.e., $ {\rm CRB}(f_1,f_2) $ in \eqref{equ:CRB_f}. 
To this end, Smith criterion in \cite{Smith_resolution05} states that: \emph{two signals are resolvable w.r.t. the frequencies if the difference between the frequencies, $ \Delta f=f_2-f_1 $, is greater than the standard deviation of the estimation of $ \Delta f $}, shown as $ \Omega_f=\Delta f=\sqrt{{\rm CRB}(\Delta f)} $.
This means that \ac{srl} is obtained  by solving the following equation,
\begin{align}
\label{equ:CRB_fd}
{\rm CRB}(\Delta f)=\bm{u}^{T}{\rm CRB}(\bm{f})\bm{u},
\end{align}
where $ \bm{u}=[1,-1]^T $.
The work in \cite{el2011statistical} extends these criteria to multidimensional harmonic retrieval cases.

However, the above metrics are only feasible in scenarios with appropriately high \ac{snr} and no modeling or signal mismatch \cite{Smith_resolution05}. 
For example, when the \ac{snr} approaches infinity, the \ac{crb} decreases to 0, implying that the resolution approaches 0 by theses criteria.
This apparently contradicts the principle that the angular resolution of arrays mainly depends on the aperture, known as Rayleigh resolution limit \cite{Rayleigh}.
The contradiction mainly stems from the significant influence of noise on the absolute value of \ac{crb}, and the scope of application is hence limited.

Directly using the existing \ac{crb} based criteria to analyze the angular resolution limit of partly calibrated arrays leads to impractical conclusions: 
We take the Smith criterion as an example.
Denote the angular resolution limit of fully and partly calibrated arrays by $ \Omega_{\rm FC} $ and $ \Omega_{\rm PC} $, respectively.
Smith criterion in \eqref{equ:CRB_fd} implies that  
\begin{align}
\Omega_{\rm FC}^2&=\bm{u}^T{\rm CRB}_{\rm FC}(\bm{\omega})\bm{u}, \\
\Omega_{\rm PC}^2&=\bm{u}^T{\rm CRB}_{\rm PC}(\bm{\omega})\bm{u},
\end{align}
where $ \bm{u}=[1,-1]^T $. 
Since $ {\rm CRB}_{\rm PC}\succcurlyeq{\rm CRB}_{\rm FC} $ \cite{RARE02}, we have $ \Omega_{\rm PC}\geqslant\Omega_{\rm FC} $, yielding that the angular resolution of partly calibrated arrays can be much worse than fully calibrated arrays'.
This conclusion conflicts with the high-resolution performance of existing direction-finding algorithms for partly calibrated arrays \cite{RARE02,exRARE04,zgb23}.

In summary, the performance of existing \ac{crb} based criteria is seriously affected by noise and model error.
The above shortcomings motivate us to propose a new resolution criterion that is less sensitive to noise and explains the angular resolution of partly calibrated arrays more practically, which is detailed in Subsection~\ref{subsec:novel_usage_CRB}.

\section{Signal model}
\label{sec:signal}


Consider $ K $ linear antenna subarrays with $ N $ array elements in total located on a line. 
The $ k $-th subarray is composed of $ |\mathcal{N}_k| $ elements, where $ \mathcal{N}_k $ denotes the index set of elements in the $ k $-th subarray and the full index set is denoted by $\mathcal{N}\equiv \bigcup_{k}\mathcal{N}_k=\{1,\dots,N\} $.
Denote by $ \varphi_n $ the position of the $ n $-th element relative to the 1-st element's for $ n\in \mathcal{N} $.
Without loss of generality, we assume $ \varphi_N>\dots>\varphi_1=0 $, and denote the whole array aperture by $ D=\varphi_N $.
Partly calibrated arrays assume that the intra-subarray displacements, $ \varphi_{p_k}-\varphi_{q_k} $ with $ p_k,q_k\in\mathcal{N}_k $ are exactly known or well calibrated, however, the inter-subarray displacement between the $ k $-th subarray and the 1-st subarray, denoted by $ \xi_k\equiv\varphi_{\bar{n}_{k-1}+1}-\varphi_{1}=\varphi_{\bar{n}_{k-1}+1} $, is assumed  \emph{unknown}, where the number of elements in the first $ k $ subarrays is defined as $ \bar{n}_k=\sum_{i=1}^{k}|\mathcal{N}_{i}| $ for $ k=1,\dots,K $ ($ \xi_1=0 $).
In comparison, fully calibrated arrays assume that the positions of all the elements, $ \{\varphi_n\}_{n \in \mathcal{N}} $, are exactly known.
Let $ \bm{\xi}=[\xi_2,\dots,\xi_K]^T\in\mathbb{R}^{(K-1)\times1} $.    
Assume that the $ N $ elements share a common/well-synchronized sampling clock.
The diagram of partly calibrated array is shown in Fig.~\ref{fig:array}.

\begin{figure}[!htbp]
\centering 
\includegraphics[width=3.5in]{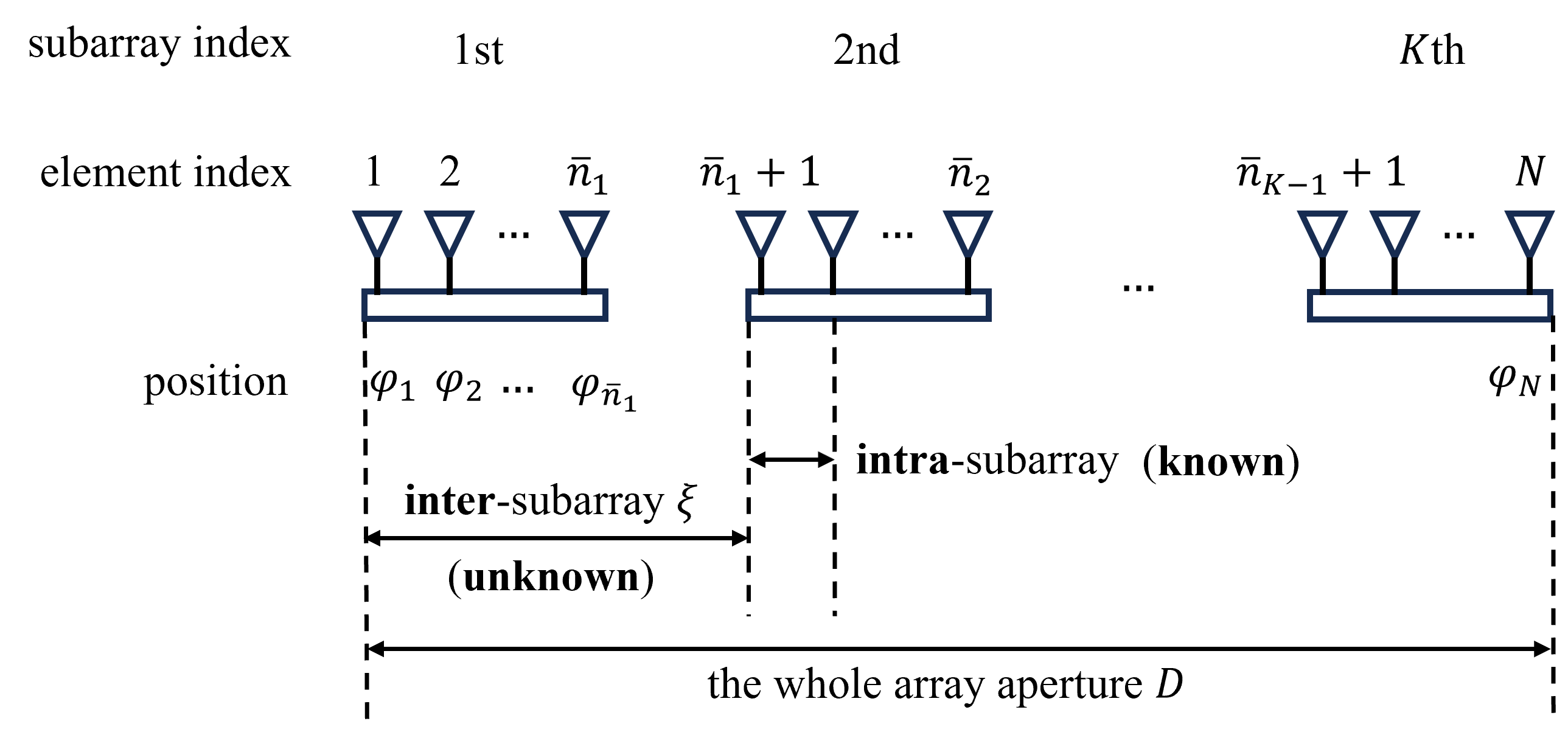}
\caption{Partly calibrated array.}
\label{fig:array}
\end{figure}

Consider $ L $ far-field \cite{far_field}, narrow-band sources impinging the signals onto the whole array from different directions $ \theta_1,\dots,\theta_L\in(-\pi/2,\pi/2) $ with $ \theta_1<\dots<\theta_L $.
Assume that these $ L $ sources are identifiable for each subarray, i.e., $ L<\min\{|\mathcal{N}_k|,k=1,\dots,K\} $.
Denote by $ \omega_l=\frac{2\pi\sin\theta_l}{\lambda} $ the spatial angular frequency of the $ l $-th source for $ l=1,\dots,L $, where $ \lambda $ is the wavelength of the source signals.
The Rayleigh resolution limit of $ \omega $ in the fully calibrated array case is defined as $ \Omega\equiv2\pi/D $ (ignoring the coefficient).
Denote the vectors of directions and spatial angular frequencies by $ \bm{\theta}=[\theta_1,\dots,\theta_L]^T\in\mathbb{R}^{L\times1} $ and $ \bm{\omega}=[\omega_1,\dots,\omega_L]^T\in\mathbb{R}^{L\times1} $, respectively. 
The maximum spatial frequency separation is defined as $ \Delta\omega\equiv\omega_L-\omega_1>0 $.

In the partly calibrated array case, the received signal of the $ k $-th subarray is expressed as
\begin{align}
\label{equ:yk}
\bm{y}_k(t)=\bm{A}_k(\bm{\omega},\xi)\bm{s}(t)+\bm{n}_k(t)\in\mathbb{C}^{|\mathcal{N}_k|\times1},
\end{align}
where $ \bm{A}_k(\bm{\omega},\xi)=[\bm{a}_k(\omega_1,\xi),\dots,\bm{a}_k(\omega_L,\xi)]\in\mathbb{C}^{|\mathcal{N}_k|\times L} $ is the $ k $-th steering matrix, $ [\bm{a}_k(\omega_l,\xi)]_{i}=e^{j\omega_l\varphi_n} $ for $ i=n-\bar{n}_{k-1} $ and $ n\in\mathcal{N}_k $, $ \bm{s}(t)\in\mathbb{C}^{L\times1} $ contains the complex coefficients of the sources, $ \bm{n}_k(t)\in\mathbb{C}^{|\mathcal{N}_k|\times1} $ denotes white noise with power $ \sigma^2 $, and $ t $ denotes sample time, $ t=t_1,t_2,\dots,t_T $.
Note that $ [\bm{a}_k(\omega_l,\xi)]_{i}=e^{j\omega_l(\varphi_n-\xi_k)}\cdot e^{j\omega_l\xi_k} $, where $ \varphi_n-\xi_k $ denotes the known intra-subarray displacement and $ \xi_k $ denotes the unknown inter-subarray displacement.

Stacking the $ K $ received signals in \eqref{equ:yk} together yields
\begin{align}
\label{equ:stack}
\bm{y}(t)=\bm{A}(\bm{\omega},\bm{\xi})\bm{s}(t)+\bm{n}(t),
\end{align}
where 
\begin{align}
\bm{y}(t)&=[\bm{y}_1^T(t),\dots,\bm{y}_K^T(t)]^T\in\mathbb{C}^{N\times1}, \nonumber \\ \bm{A}(\bm{\omega},\bm{\xi})&=[\bm{a}(\omega_1,\bm{\xi}),\dots,\bm{a}(\omega_L,\bm{\xi})]\in\mathbb{C}^{N\times L}, \nonumber \\ \bm{a}(\omega_l,\bm{\xi})&=[\bm{a}_1^T(\omega_l,\xi),\dots,\bm{a}_K^T(\omega_l,\xi)]^T\in\mathbb{C}^{N\times1}, \nonumber \\ \bm{n}(t)&=[\bm{n}_1^T(t),\dots,\bm{n}_K^T(t)]^T\in\mathbb{C}^{N\times1}. \nonumber
\end{align}
In \eqref{equ:stack}, $ \bm{y}(t) $ is known, $ \bm{A}(\bm{\omega},\bm{\xi}),\bm{s}(t),\bm{n}(t) $ are unknown and $ \bm{\omega} $ is to be estimated.
For fully and partly calibrated arrays, $ \bm{\xi} $ in $ \bm{A}(\bm{\omega},\bm{\xi}) $ is assumed to be completely known and unknown, respectively.
For conciseness, we abbreviate $ \bm{A}(\bm{\omega},\bm{\xi}) $ to $ \bm{A} $, $ \bm{a}_k(\omega_l,\xi) $ to $ \bm{a}_{k}(\omega_l) $ and $ \bm{a}(\omega_l,\bm{\xi}) $ to $ \bm{a}(\omega_l) $, respectively.

The \ac{crb}s of spatial frequencies $\bm{\omega}$ in fully and partly calibrated arrays, denoted by $ {\rm CRB}_{\rm FC}(\bm{\omega}) $ and $ {\rm CRB}_{\rm PC}(\bm{\omega}) $, respectively, are shown in Appendix~\ref{app:cal_CRB}. 





\section{Our main contributions}
\label{sec:indicate}

In this section, we detail our main contributions: 1) propose a new \ac{crb} based resolution criterion less sensitive to noise and model error;
2) use the proposed criterion to show that partly calibrated arrays achieve high angular resolution similar to fully calibrated arrays. 

Particularly, we first introduce the intuition behind the proposed criterion of indicating angular resolution by the turning point of CRB in Subsection~\ref{subsec:novel_usage_CRB}. 
We then analyze the angular resolution of partly calibrated arrays and show the high-resolution ability in Subsection~\ref{subsec:contribution_partly}.
Finally, we apply the proposed resolution criterion to fully calibrated arrays and compare the analysis results with prior works in Subsection~\ref{subsec:contribution_fully}.

\subsection{The proposed resolution criterion}
\label{subsec:novel_usage_CRB}

To meet the challenge of analyzing the resolution limit of partly calibrated arrays, we propose a new resolution criterion using the turning point of \ac{crb}. 
The proposed criterion states that: \emph{two signals are said to be resolvable w.r.t. the frequencies if the difference between the frequencies, $ \Delta f=f_2-f_1 $, is greater than the turning point of \ac{crb} w.r.t. $ \Delta f $ denoted by $ \Omega_f\equiv\mathcal{T}\left({\rm CRB}(\bm{f})\right) $.}
The definition of the \ac{crb} turning point is detailed in Section~\ref{subsec:turningpoint}.

The basic principle of this criterion is based on a significant phenomenon of $ {\rm CRB}(\bm{f}) $: Fix the source of $ f_1 $ as a reference and gradually increase the frequency difference $ \Delta f>0 $. The $ {\rm CRB}(\bm{f}) $ first declines rapidly w.r.t. $ \Delta f $, and then it almost remains constant (with a small fluctuation). The turning point is located close to the angular resolution limit.
We show this phenomenon in fully and partly calibrated arrays as an example.
Consider using $ 10 $ half-wavelength, uniform linear subarrays to estimate the directions of $ 2 $ sources with the angles $ \bm{\theta}=[1.2^{\circ},1.2^{\circ}+\Delta\theta]^T $, where $ \Delta\theta>0 $ denotes the angle difference.
There are $ 10 $ elements in each subarray and the adjacent subarrays are spaced at half-wavelength apart.
Denote by $ \Omega_f\equiv2\pi/D $ the angular resolution limit, where $ D $ is the whole array aperture.
The \ac{crb}s of fully and partly calibrated arrays, $ {\rm CRB}_{\rm FC} $ and $ {\rm CRB}_{\rm PC} $, are shown in Fig.~\ref{fig:show}. 

\begin{figure}[!htbp]
\centering 
\includegraphics[width=3.2in]{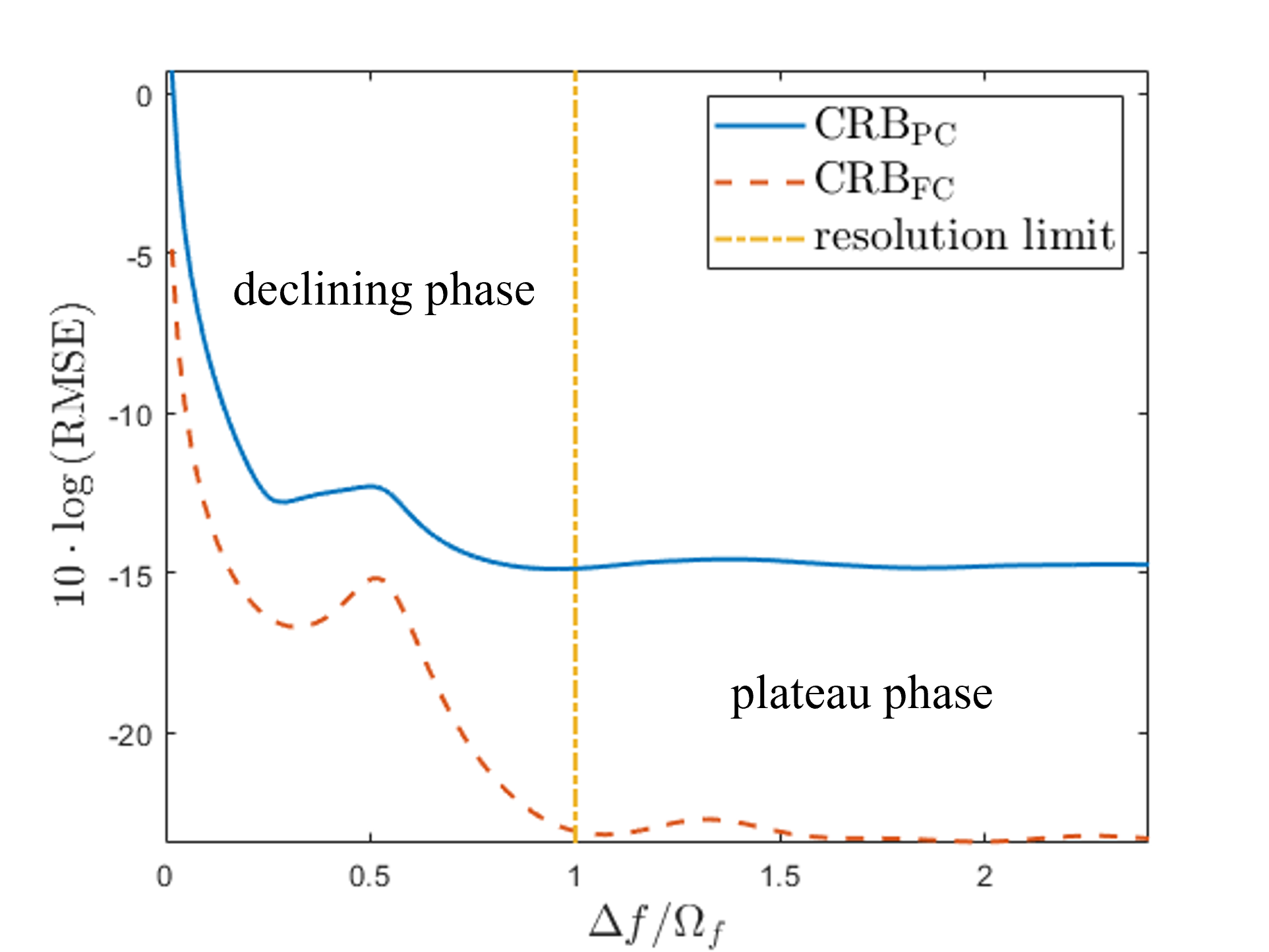}
\caption{$ {\rm CRB}_{\rm FC} $ and $ {\rm CRB}_{\rm PC} $ w.r.t. $ \Delta f $.}
\label{fig:show}
\end{figure}

From Fig.~\ref{fig:show}, we find that the two \ac{crb}s decline when $ \Delta f/\Omega_f\leqslant1 $, and tend to be stable when $ \Delta f/\Omega_f>1 $.
We provide an intuitive explanation of this phenomenon: 
when the frequency difference $ \Delta f $ is within the resolution limit $ \Omega_f $, the sources cannot be distinguished, disrupting the estimation performance and greatly increasing the \ac{crb}; 
in this case, increasing $ \Delta f $ increases the resolvability and hence decreases the \ac{crb}, which corresponds to the \emph{declining phase} of the \ac{crb}. 
When $ \Delta f $ exceeds $ \Omega_f $, the estimation accuracy is close to the counterpart of estimating the sources separately \cite{Lee92}; consequently, the accuracy is less relevant to $ \Delta f $, keeping the \ac{crb} constant,  which corresponds to the \emph{plateau phase} of the \ac{crb}.
Here, 'plateaus phase' refers to a situation where, after some change, a trend or curve stabilizes and no longer shows significant variation or fluctuation, which is consistent with the \ac{crb} when $ \Delta f\geqslant \Omega_f $.
Therefore, the turning point between the declining and the plateau phases indicates angular resolution.

For another reason, we use the turning point of \ac{crb} as the angular resolution limit since it can reflect the influence on each other for parameter estimation, which is precisely the meaning of resolution.
Particularly, when two sources are not distinguished, they significantly influence each other, resulting in poor estimation performance as reflected in the decline part of \ac{crb}. 
When the two sources can be separated, their mutual influence is minimal, and the CRB approaches the performance of separate estimation. Thus, the transition point between these two states can be used to indicate the resolution, which is the turning point of \ac{crb}.

The reason that the turning point is not exactly located at $ \Delta f=\Omega_f $ is that on one hand, the theoretical proof in the paper represents the result in a statistical average sense (see the A2 assumption in Section~\ref{sec:analysis}), which cannot ensure that the turning point of the \ac{crb} in every specific scenario precisely locates on the resolution cell; 
on the other hand, resolution criterion, either the Rayleigh resolution limit or 3dB beam width, is an empirical concept, which is not an absolute indication of separability or non-separability.
Overall, it can ensure that the magnitude of the resolution is correct.

In the proposed resolution criterion, angular resolution depends not only on the angular difference but also on the absolute angle values.
This is because that we use the frequency separation, $ \Delta f $, to reflect source separation instead of angle separation, $ \Delta\theta $.
Particularly in Fig.~\ref{fig:show}, we have $ \Delta f=2\pi(\sin\theta_2-\sin\theta_1)/\lambda $ and $ \Delta\theta=\theta_2-\theta_1 $.
Through trigonometric transformation, we have
\begin{align}
\Delta f=\frac{4\pi}{\lambda} \sin\left(\frac{\Delta\theta}{2}\right)\cos\left(\frac{\theta_2+\theta_1}{2}\right).
\end{align}
In the above equation, the sin part reflects the angular resolution and the cos part reflects the angle values themselves.
Therefore, the \ac{crb} is not only related to the angle separation $ \Delta\theta $, but also angle values $ (\theta_2+\theta_1)/2 $.
It can be found that large angle values have worse resolution.

Note that the proposed resolution criterion using the \ac{crb} turning point is almost unaffected by \ac{snr}.
This is because in the \ac{crb} expression (see \eqref{equ:CRB_FC} and \eqref{equ:CRB_PC} in Appendix~\ref{app:cal_CRB}), the component of white noise $ \sigma^2 $ can be isolated, which means that it only affects the absolute value of the \ac{crb} without altering its relative variation with respect to the angle difference. 
Although this property contradicts the commonly held conclusion that resolution is related to \ac{snr}, it is rational in analyzing the effect of position errors in distributed arrays.
This is because that inter-subarray position errors fundamentally differ from white noise errors:
the former are multiplicative errors, while the latter are additive errors.
When the \ac{snr} is extremely low, it becomes impossible to distinguish and estimate the angles, making it ineffective to analyze the impact of multiplicative errors on resolution. 
Therefore, we aim to conduct an analysis method that is unaffected or minimally affected by additive noise. The proposed resolution criterion using the \ac{crb} turning point can effectively address this challenge. 

In the sequel, we apply the proposed criterion to analyze the angular resolution of fully and partly calibrated arrays.
A main conclusion is that partly calibrated arrays achieve high resolution similar to that of fully calibrated arrays.
Note that the above phenomenon of \ac{crb} w.r.t. $ \Delta f $ was also mentioned in previous works \cite{Lee92,Smith_resolution05}, but they did not use it to indicate resolution and provide the corresponding theoretical guarantees.
The key challenge is to analyze the partial derivative of \ac{crb} to $ \Delta f $, which is hard to analytically calculate.
In Section~\ref{sec:analysis}, we provide an approximate method to solve this problem.


\subsection{Resolution analysis on partly calibrated arrays}
\label{subsec:contribution_partly}

Here, we analyze the declining phase, plateau phase, and turning point of $ {\rm CRB}_{\rm PC} $, and use the turning point to indicate angular resolution of partly calibrated arrays.

\subsubsection{Declining phase of \texorpdfstring{${\rm CRB}_{\rm PC} $}{e}}

We analyze the declining phase of $ {\rm CRB}_{\rm PC} $ using small quantity approximation, shown as Proposition~\ref{prop:prop2}, which illustrates that the main declining rate of $ {\rm CRB}_{\rm PC} $ is proportional to $ (\Delta\omega)^{-2(L-1)} $. 

\begin{proposition}[declining phase for partly calibrated arrays]
\label{prop:prop2}
When $ \Delta\omega\ll\Omega $, we have
\begin{align}
\label{equ:insideprop2}
{\rm CRB}_{\rm PC}=(\Delta\omega)^{-2(L-1)}\bm{C}_G+O((\Delta\omega)^{-2(L-1)+1}),
\end{align}
where $ \bm{C}_G $ is a constant w.r.t. $ \Delta\omega $.
\end{proposition}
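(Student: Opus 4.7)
The strategy is to exploit the near-collinearity of the steering vectors $\bm{a}(\omega_l)$ when all $L$ source frequencies cluster near a common reference $\omega_0$, which forces the Fisher information matrix to become nearly singular. Starting from the expression for ${\rm CRB}_{\rm PC}$ in Appendix~\ref{app:cal_CRB}, the goal is to identify and isolate the leading algebraic singularity in $\Delta\omega$ as $\Delta\omega\to 0$. I would do this by reducing the full FIM (which has blocks for $\bm{\omega}$, the unknown $\bm{\xi}$, the source parameters and the noise power) via two successive Schur complements: first to eliminate signal and noise nuisance parameters, and then to eliminate $\bm{\xi}$, obtaining an explicit matrix whose inverse is the $\bm{\omega}$-block of ${\rm CRB}_{\rm PC}$.

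The central technical step is a Taylor expansion of the steering vectors at the reference $\omega_0$. Writing $\omega_l = \omega_0 + \alpha_l\,\Delta\omega$ with the $\alpha_l\in[0,1]$ fixed (so that $\alpha_L-\alpha_1=1$), expand
\begin{equation}
\bm{a}(\omega_l) \;=\; \sum_{m=0}^{L-1} \frac{(\alpha_l\,\Delta\omega)^m}{m!}\,\bm{a}^{(m)}(\omega_0) \;+\; O\bigl((\Delta\omega)^L\bigr).
\end{equation}
This yields a factorization $\bm{A} \;=\; \bm{\Psi}\,\bm{D}\,\bm{V} \;+\; O(\Delta\omega^L)$, where $\bm{\Psi}=[\bm{a}(\omega_0),\bm{a}'(\omega_0),\ldots,\bm{a}^{(L-1)}(\omega_0)]$ is a constant basis, $\bm{D}=\mathrm{diag}\bigl(1,\Delta\omega,\ldots,(\Delta\omega)^{L-1}\bigr)$ carries all $\Delta\omega$-dependence, and $\bm{V}$ is a Vandermonde-type matrix in the $\alpha_l$ (independent of $\Delta\omega$). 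Analogous factorizations hold for the derivative matrices $\partial\bm{A}/\partial\bm{\omega}$ and $\partial\bm{A}/\partial\bm{\xi}$, each built from the same basis $\bm{\Psi}$ with appropriately shifted diagonal scalings.

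Substituting these factorizations into the FIM and performing the Schur reductions, the diagonal scaling $\bm{D}$ can be pulled outside the relevant projections, so that ${\rm CRB}_{\rm PC}(\bm{\omega}) = \bm{V}^{-1}\bm{D}^{-1}\,\bm{G}(\Delta\omega)\,\bm{D}^{-1}\bm{V}^{-T}$, where $\bm{G}(\Delta\omega) = \bm{G}_0 + O(\Delta\omega)$ is a matrix whose leading term $\bm{G}_0$ depends only on $\bm{\Psi}$, $\bm{V}$ and the projection onto the orthogonal complement of the $\bm{\xi}$-Jacobian. Since the worst diagonal entry of $\bm{D}^{-1}$ scales as $(\Delta\omega)^{-(L-1)}$, this entry dominates the average of the diagonal of ${\rm CRB}_{\rm PC}(\bm{\omega})$, giving the stated $(\Delta\omega)^{-2(L-1)}\bm{C}_G$ leading order; the residual $O((\Delta\omega)^{-2(L-1)+1})$ comes both from the next Taylor term in $\bm{G}(\Delta\omega)$ and from cross-terms with the subleading diagonal entries of $\bm{D}^{-1}$.

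The main obstacle is the extra Schur complement that projects out the unknown $\bm{\xi}$. In fully calibrated arrays only the signal/noise nuisance parameters are projected out, and the $(\Delta\omega)^{-2(L-1)}$ order is classical; here one must verify that the additional projection onto the orthogonal complement of $\partial\bm{A}/\partial\bm{\xi}$ does not accidentally annihilate the highest-order direction $\bm{a}^{(L-1)}(\omega_0)$, which would reduce the exponent to $-2(L-2)$ and change the conclusion. The key sub-lemma I would establish is that $\bm{a}^{(L-1)}(\omega_0)$ retains a non-vanishing component orthogonal to both the signal subspace $\mathrm{range}(\bm{\Psi}_{0:L-2})$ and the $\bm{\xi}$-Jacobian range. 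This is plausible because the $\bm{\xi}$-Jacobian has only $K-1$ columns built from low-order derivatives restricted to individual subarray index sets $\mathcal{N}_k$, whereas $\bm{a}^{(L-1)}(\omega_0)$ is a global polynomial-weighted vector; a generic non-degeneracy argument (encoded, e.g., in the A2 assumption of Section~\ref{sec:analysis}) should suffice to preserve the leading order. Once this is settled, the explicit form of $\bm{C}_G$ follows from collecting the constant factors $\bm{V}$, $\bm{\Psi}$ and the projector.
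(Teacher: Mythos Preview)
Your overall strategy—Taylor-expand the steering vectors at a common reference, factor out a diagonal $\Delta\omega$-scaling, and track which order survives the nuisance-parameter reductions—is the same mechanism the paper exploits, and your identification of the central obstacle (does the extra $\bm{\xi}$-projection kill the leading term?) is exactly the delicate point. The paper, however, organizes the computation differently and resolves the non-degeneracy by a purely deterministic argument that does \emph{not} invoke Assumption~A2.

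Concretely, the paper does not rebuild the Schur complements from the full FIM; it starts from the closed form ${\rm CRB}_{\rm PC}=\tfrac{\sigma^2}{2}(\bm{F}-\bm{M}\bm{G}^{-1}\bm{M}^T)^{-1}$ already given in Appendix~\ref{app:cal_CRB} and treats $\bm{M}\bm{G}^{-1}\bm{M}^T$ as the sole novelty relative to the fully calibrated case, whose $(\Delta\omega)^{2(L-1)}$ scaling for $\bm{F}$ is Lemma~\ref{prop:prop1}. Using the same Taylor expansion you describe, the paper shows $\bm{\Pi}_{\bm{A}}^\perp\bm{H}=\bm{C}_H+O(\Delta\omega)$ with $\bm{C}_H$ constant and $\bm{M}=(\Delta\omega)^{L-1}\cdot{\rm const}+O((\Delta\omega)^L)$, so that $\bm{M}\bm{G}^{-1}\bm{M}^T$ also scales as $(\Delta\omega)^{2(L-1)}$ and the proposition follows by subtraction. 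The non-degeneracy actually needed is $\bm{C}_H\neq\bm{0}$, and it is proved by contradiction: if some column of $\bm{C}_H$ vanished, a vector supported on a single subarray block would lie in the range of $\dot{\bm{A}}=[\bm{a}(\omega_1),\bm{a}^{(1)}(\omega_1),\ldots,\bm{a}^{(L-1)}(\omega_1)]$; restricting to the first subarray, $\dot{\bm{A}}_1\in\mathbb{C}^{|\mathcal{N}_1|\times L}$ is a diagonally scaled Vandermonde matrix of full column rank because $|\mathcal{N}_1|>L$, forcing a contradiction. This uses only the identifiability hypothesis $L<\min_k|\mathcal{N}_k|$ from Section~\ref{sec:signal}, not the random-position assumption~A2. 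Your appeal to A2 would give at best a generic (almost-sure) statement and is unnecessary here; the deterministic Vandermonde argument on a single subarray is both stronger and simpler than the sub-lemma you outline.
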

\begin{proof}
See Appendix \ref{app:proof_of_prop2}.
\end{proof}

\subsubsection{Plateau phase of \texorpdfstring{${\rm CRB}_{\rm PC} $}{e}}

We analyze the plateau phase of $ {\rm CRB}_{\rm PC} $, shown as Proposition~\ref{prop:prop_CRB_approx2}, where the assumptions are detailed in Subsection~\ref{subsec:assumptions}.
Proposition~\ref{prop:prop_CRB_approx2} illustrates that $ {\rm CRB}_{\rm PC} $ remains almost constant w.r.t. $ \Delta\omega $ when $ \Delta\omega\geqslant\Omega $. 
Here $ {\rm CRB}(\Delta\omega) $ denotes the functional relationship between \ac{crb} and $ \Delta\omega $ instead of the \ac{crb} of $ \Delta\omega $.

\begin{proposition}[plateau phase for partly calibrated arrays]
\label{prop:prop_CRB_approx2}
Under assumptions A1-A3, when $ \Delta\omega\geqslant\Omega $, we have 
\begin{align}
\label{equ:approx_now2}
\left|\frac{\partial {\rm CRB}_{\rm PC}(\Delta\omega)}{\partial \Delta\omega}\right|\approx 0
\end{align}
with high probability.
\end{proposition}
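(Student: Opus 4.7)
The plan is to exploit the fact that once $\Delta\omega \geqslant \Omega = 2\pi/D$, the two source steering vectors $\bm{a}(\omega_1)$ and $\bm{a}(\omega_L)$ (and their derivatives) are approximately orthogonal across the whole aperture, so the Fisher information essentially block-decouples. I would start from the explicit Slepian--Bangs form of $\text{CRB}_{\text{PC}}$ given in Appendix~\ref{app:cal_CRB}. Schematically it has the structure
\begin{equation*}
\text{CRB}_{\text{PC}}(\bm{\omega}) \propto \bigl[\,\text{Re}\{(\bm{D}^H \bm{P}_{\bm{B}}^\perp \bm{D}) \odot \bm{R}_s^T\}\,\bigr]^{-1},
\end{equation*}
where $\bm{D}$ stacks $\partial \bm{a}(\omega_l)/\partial \omega_l$, the matrix $\bm{B}=[\bm{A},\,\partial \bm{A}/\partial \bm{\xi}]$ gathers both the steering vectors and the nuisance directions induced by the unknown inter-subarray displacements, and $\bm{P}_{\bm{B}}^\perp = \bm{I} - \bm{B}(\bm{B}^H\bm{B})^{-1}\bm{B}^H$.

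First I would show that the normalized Gram entries $\bm{a}(\omega_i)^H \bm{a}(\omega_j)/N$, and the analogous cross-inner products between $\bm{a}(\omega_l)$, $\partial\bm{a}(\omega_l)/\partial\omega_l$, and $\partial \bm{a}(\omega_{l'})/\partial \xi_k$, behave like Dirichlet-type kernels in $\Delta\omega$ evaluated \emph{outside} their main lobe, hence decay to order $O(1/(N\Delta\omega))$. Under A2 (which I read as a statistical-average / uncorrelated-sources assumption), these small cross terms contribute only negligibly with high probability, so I can write $\bm{B}^H\bm{B} = \bm{G}_0 + \bm{E}(\Delta\omega)$ with $\bm{G}_0$ block-diagonal (one block per source plus the $\bm{\xi}$-block) and $\|\bm{E}\|$ small. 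A Neumann expansion of $(\bm{B}^H\bm{B})^{-1}$ then gives $\bm{P}_{\bm{B}}^\perp = \bm{P}_0 + \bm{\Delta}(\Delta\omega)$ where $\bm{P}_0$ is the source-wise decoupled projector independent of $\Delta\omega$.

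Substituting back, the leading term of $\text{CRB}_{\text{PC}}$ reduces to a direct sum of single-source CRBs, which is manifestly $\Delta\omega$-independent; the residual $\Delta\omega$-dependent contributions come entirely from $\bm{\Delta}(\Delta\omega)$ and the slow oscillations in the off-diagonal Gram entries. Differentiating term by term, each such piece is of magnitude $O(1/(N\Delta\omega))$ with a derivative of the same order (not larger, since the Dirichlet kernels are bounded with bounded derivative outside their main lobe), so $|\partial \text{CRB}_{\text{PC}}/\partial \Delta\omega|$ is dominated by these vanishing corrections and the claimed $\approx 0$ follows.

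The main obstacle I anticipate is the presence of the $\partial \bm{A}/\partial \bm{\xi}$ columns: they live on subarray-restricted index sets and are \emph{not} orthogonal to the source directions in the same clean Dirichlet-kernel sense as the pure steering vectors. Controlling the off-block entries $\bm{A}^H (\partial \bm{A}/\partial \bm{\xi})$ and showing that the Neumann expansion remains well-conditioned in the presence of $K-1$ extra nuisance directions is the delicate step, and is where assumptions A1 and A3 (and the ``high probability'' qualifier arising from averaging over the source covariance in A2) will have to be invoked to rule out the degenerate configurations in which the nuisance subspace aligns unfavorably with $\bm{D}$.
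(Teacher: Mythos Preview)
Your high-level picture (cross inner products become negligible once $\Delta\omega\geqslant\Omega$, so the Fisher information block-decouples and the CRB freezes) matches the paper's, but two points deserve correction and comparison.

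First, you misread A2. In this paper A2 is \emph{not} an uncorrelated-sources or source-averaging assumption; it says the subarray centers $\bar{\varphi}_k$ are i.i.d.\ uniform on $[0,D]$, all subarrays have equal size, and $K$ is large. The ``high probability'' in the proposition is over this random array geometry, not over a source covariance. Concretely, the paper packages the cross terms into the scalars $Q_i(\Delta\omega)=\sum_n \varphi_n^{\,i} e^{j\Delta\omega\varphi_n}/\sum_n\varphi_n^{\,i}$ and uses Hoeffding's inequality on the random $\varphi_n$ to show $|Q_i|\approx0$ and $|\partial Q_i/\partial\Delta\omega|\approx0$ when $\Delta\omega\geqslant\Omega$. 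Your Dirichlet-kernel intuition is exactly the expectation $\mathbb{E}[Q_0]=\sin(\Delta\omega D)/(\Delta\omega D)$, but the smallness is probabilistic, not a deterministic $O(1/(N\Delta\omega))$ bound.

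Second, the technical route differs. Rather than a Neumann expansion of $(\bm{B}^H\bm{B})^{-1}$, the paper works with the explicit decomposition ${\rm CRB}_{\rm PC}=\tfrac{\sigma^2}{2}(\bm{F}-\bm{M}\bm{G}^{-1}\bm{M}^T)^{-1}$ and shows separately that $\bm{F}$ and $\bm{M}\bm{G}^{-1}\bm{M}^T$ are (approximately) functions only of $Q_i(p\Delta\omega)$; the chain rule then gives $\partial{\rm CRB}_{\rm PC}/\partial\Delta\omega\approx(\partial{\rm CRB}_{\rm PC}/\partial Q)\cdot(\partial Q/\partial\Delta\omega)\approx0$. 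The obstacle you flag, controlling the nuisance block $\bm{A}^H(\partial\bm{A}/\partial\bm{\xi})$, is precisely where the paper does the real work: it uses A3 (each subarray collapses to a point, so $\varphi_n\approx\bar{\varphi}_k$ and per-subarray sums reduce to $Q_0^k$) to simplify $\bm{M}$, and uses A2's ``$K$ large'' to obtain an explicit approximate inverse $\widehat{\bm{G}}^{-1}$ (Lemma~\ref{lemma:lemmaBB}) rather than a perturbative one. Your Neumann approach would need $\|\bm{G}_0^{-1}\bm{E}\|<1$, but the subarray-supported $\bm{\xi}$-columns produce off-block entries that scale like $|\mathcal{N}_k|$ rather than averaging down over the whole array, so that spectral smallness is not automatic; the paper sidesteps this by the point-subarray and large-$K$ approximations instead.
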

\begin{proof}
See Subsection \ref{subsec:the_proof1}.
\end{proof}

This proposition illustrates that under the scenario of using a large, sparse, uniformly distributed array to resolve two closely spaced sources (corresponding to assumptions A1-A3 in Section \ref{subsec:assumptions}), $ {\rm CRB}_{\rm PC} $ remains almost constant w.r.t. $ \Delta\omega $ when $ \Delta\omega\geqslant\Omega $.
Based on the proposed criterion, the stable \ac{crb} implies that the sources are resolvable in this situation.
The approximation in Proposition~\ref{prop:prop_CRB_approx2} is used for the convenience of explanation and a more rigorous expression is detailed in the corresponding proof.

\subsubsection{Turning point of \texorpdfstring{${\rm CRB}_{\rm PC} $}{e}}

Intuitively, the intersection of the declining phase and plateau phase of \ac{crb} corresponds to the turning point.
Based on Proposition~\ref{prop:prop2} and Proposition~\ref{prop:prop_CRB_approx2}, a criterion \cite{fishler2006spatial} is used to determine the turning point of $ {\rm CRB}_{\rm PC} $ in a strict sense, given by
\begin{align}
\label{equ:CRB_PC_turningpoint}
\mathcal{T}\left({\rm CRB}_{\rm PC}\right)=\Omega=2\pi/D,
\end{align}
where the explanation is detailed in Subsection~\ref{subsec:turningpoint}.

The conclusion of applying our proposed resolution criterion to $ {\rm CRB}_{\rm PC} $ in \eqref{equ:CRB_PC_turningpoint} implies that high angular resolution is achievable for partly calibrated arrays, inversely proportional to the whole array aperture $ D $, which is consistent with existing direction-finding algorithms for partly calibrated arrays \cite{RARE02,exRARE04,zgb23}.
A main advantage relative to the existing \ac{crb} based \ac{srl} criteria is that the proposed criterion is less sensitive to noise.
This helps to focus on the main factors that affect the resolution (array aperture), while reducing the interference of secondary factors (noise), and avoid the limitations of existing \ac{crb} based criteria.


\subsection{Resolution analysis on fully calibrated arrays}
\label{subsec:contribution_fully}

For comparison, we analyze the declining phase, plateau phase, and turning point of $ {\rm CRB}_{\rm FC} $, and use the turning point to indicate angular resolution of fully calibrated arrays.

\subsubsection{Declining phase of \texorpdfstring{${\rm CRB}_{\rm FC} $}{e}}

The declining phase of $ {\rm CRB}_{\rm FC} $ has been theoretically analyzed using small quantity approximation in \cite{Lee92}, shown as Lemma~\ref{prop:prop1}.
It is proven that when $ \Delta\omega\ll\Omega $, $ {\rm CRB}_{\rm FC} $ declines at the rate mainly proportional to $ (\Delta\omega)^{-2(L-1)} $, which is the same as the counterpart of $ {\rm CRB}_{\rm PC} $.

\begin{lemma}[declining phase for fully calibrated arrays\cite{Lee92}]
\label{prop:prop1}
When $ \Delta\omega\ll\Omega $, we have
\begin{align}
\label{equ:LeeF}
{\rm CRB}_{\rm FC}=(\Delta\omega)^{-2(L-1)}\bm{C}_F+O((\Delta\omega)^{-2(L-1)+1}),
\end{align}
where $ \bm{C}_{F} $ is a constant w.r.t. $ \Delta\omega $.
\end{lemma}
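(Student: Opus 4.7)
The proof follows the asymptotic small-separation analysis of Lee~\cite{Lee92} applied to the closed-form expression of ${\rm CRB}_{\rm FC}(\bm{\omega})$ in Appendix~\ref{app:cal_CRB}. Intuitively, when $\Delta\omega \ll \Omega$ the steering vectors $\bm{a}(\omega_1),\dots,\bm{a}(\omega_L)$ become nearly collinear, so the Fisher information matrix becomes ill-conditioned along the directions corresponding to higher derivatives of $\bm{a}$; the CRB, which inverts the FIM, blows up at a polynomial rate determined by how quickly the steering vectors separate in $\Delta\omega$.

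The plan is as follows. First, pick a reference frequency $\omega_0$ (e.g.\ $\omega_0=\omega_1$) and set $\delta_l = \omega_l - \omega_0$, so $|\delta_l| = O(\Delta\omega)$ for every $l$. Taylor-expand each steering vector to order $L-1$ around $\omega_0$:
$$\bm{a}(\omega_l) \;=\; \sum_{k=0}^{L-1} \frac{\delta_l^{k}}{k!}\, \bm{a}^{(k)}(\omega_0) \;+\; O(\Delta\omega^{L}).$$
Stacking the $L$ columns yields the factorization $\bm{A} = \bm{B}\bm{V} + O(\Delta\omega^{L})$, where $\bm{B}\equiv [\bm{a}(\omega_0),\bm{a}'(\omega_0),\dots,\bm{a}^{(L-1)}(\omega_0)/(L-1)!]\in\mathbb{C}^{N\times L}$ is independent of $\Delta\omega$, and $[\bm{V}]_{k+1,l} = \delta_l^{k}$ is a Vandermonde matrix whose $k$-th row scales like $(\Delta\omega)^{k}$. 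Apply the same expansion to the derivative matrix $\bm{D} = \partial\bm{A}/\partial\bm{\omega}^{T}$ to obtain $\bm{D} = \bm{B}\tilde{\bm{V}} + O(\Delta\omega^{L-1})$, where the analogous Vandermonde-like factor $\tilde{\bm{V}}$ is shifted by one order so that its $k$-th row scales like $(\Delta\omega)^{k-1}$.

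Next, substitute these factorizations into the CRB kernel. Because $\bm{A}$ and $\bm{B}$ span the same subspace up to $O(\Delta\omega^{L})$, the projection $\bm{\Pi}^{\perp}_{\bm{A}}$ coincides with the $\Delta\omega$-independent projection $\bm{\Pi}^{\perp}_{\bm{B}}$ to leading order. Thus $\bm{D}^{H}\bm{\Pi}^{\perp}_{\bm{A}}\bm{D} = \tilde{\bm{V}}^{H}\bm{B}^{H}\bm{\Pi}^{\perp}_{\bm{B}}\bm{B}\tilde{\bm{V}} + \text{(lower-order terms)}$. Isolate the $\Delta\omega$ scaling by writing $\tilde{\bm{V}} = \bm{\Lambda}\tilde{\bm{V}}_{0}$ with $\bm{\Lambda} = \mathrm{diag}(1,\Delta\omega,\dots,(\Delta\omega)^{L-2})$ multiplied by a well-conditioned, $\Delta\omega$-independent Vandermonde core $\tilde{\bm{V}}_{0}$. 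Taking $\mathrm{Re}(\cdot\odot\hat{\bm{R}}_{s}^{T})$ and inverting, the two-sided factor $\bm{\Lambda}^{-1}$ produces an overall $(\Delta\omega)^{-2(L-1)}$ front factor multiplied by a $\Delta\omega$-independent matrix $\bm{C}_F$ assembled from $\tilde{\bm{V}}_{0}$, $\bm{B}^{H}\bm{\Pi}^{\perp}_{\bm{B}}\bm{B}$, and $\hat{\bm{R}}_{s}$; the next contribution is one power of $\Delta\omega$ smaller, giving the remainder $O((\Delta\omega)^{-2(L-1)+1})$.

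The main obstacle is the bookkeeping of orders through the matrix inversion. One must verify that the leading-order Gram kernel restricted to the derivative block is strictly positive definite so that $\bm{C}_F$ is well defined and nonzero (ensuring the rate is tight), and that cross-terms produced by the Hadamard product with $\hat{\bm{R}}_{s}^{T}$ and by the residual $O(\Delta\omega^{L})$ in $\bm{\Pi}^{\perp}_{\bm{A}} - \bm{\Pi}^{\perp}_{\bm{B}}$ only contaminate subleading orders. Lee~\cite{Lee92} handles these subtleties by diagonalizing the Vandermonde factor explicitly and checking invertibility of the residual Gram matrix; we would adopt the same route, which is why the statement is recorded as a lemma rather than reproved.
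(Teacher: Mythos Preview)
The paper does not prove this lemma; it is quoted from \cite{Lee92}, and the only summary the paper records (and reuses in Appendix~\ref{app:proof_of_prop2}) is the pair of facts $\bm{\Pi}_{\bm A}^{\perp}=\bm{C}_A+O(\Delta\omega)$ and $\bm D=(\Delta\omega)^{L-1}\bm C_D+O((\Delta\omega)^L)$. Your sketch is in the right spirit---Taylor expansion around $\omega_0$ and Vandermonde bookkeeping---but the mechanism you give for the scaling is incorrect.

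The concrete slip is the displayed identity $\bm D^{H}\bm{\Pi}_{\bm A}^{\perp}\bm D=\tilde{\bm V}^{H}\bm B^{H}\bm{\Pi}_{\bm B}^{\perp}\bm B\,\tilde{\bm V}+\text{(lower order)}$. Since $\bm{\Pi}_{\bm B}^{\perp}\bm B=\bm 0$ by construction, this ``leading term'' vanishes identically, so nothing is actually extracted from the factor $\tilde{\bm V}$. The $(\Delta\omega)^{2(L-1)}$ scaling of $\bm F$ comes instead from the remainder you discarded: writing $\bm D=\bm B\tilde{\bm V}+\bm E$ with $\bm E=O((\Delta\omega)^{L-1})$, the leading part of the $l$-th column of $\bm E$ is $s_l\,\bm a^{(L)}(\omega_0)\,\delta_l^{L-1}/(L-1)!$, which is the first Taylor term \emph{outside} $\mathrm{span}(\bm B)$ and hence the first to survive the projection. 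This, together with the $O(\Delta\omega)$ correction $\bm{\Pi}_{\bm A}^{\perp}-\bm{\Pi}_{\bm B}^{\perp}$ acting on the $O(1)$ part of $\bm D$, is what Lee assembles into the uniform factor $(\Delta\omega)^{L-1}$ on $\bm{\Pi}_{\bm A}^{\perp}\bm D$ (the paper's $\bm C_D$). Consequently your $\bm\Lambda=\mathrm{diag}(1,\Delta\omega,\dots,(\Delta\omega)^{L-2})$ is a red herring: two-sided multiplication by $\bm\Lambda^{-1}$ produces entrywise scalings $(\Delta\omega)^{-(i+j)}$ with $i,j\in\{0,\dots,L-2\}$, not a uniform $(\Delta\omega)^{-2(L-1)}$ front factor. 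Your closing paragraph is right that nondegeneracy of the leading Gram block is where the real work lies, and that is exactly the step one must take from Lee's explicit computation rather than from the factorization you wrote.
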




\subsubsection{Plateau phase of \texorpdfstring{${\rm CRB}_{\rm FC} $}{e}}

We analyze the plateau phase of $ {\rm CRB}_{\rm FC} $ in Proposition~\ref{prop:prop_CRB_approx1}, which illustrates that $ {\rm CRB}_{\rm FC} $ remains almost constant w.r.t. $ \Delta\omega $ when $ \Delta\omega\geqslant\Omega $.
The rigorous expression is similar to the counterpart of $ {\rm CRB}_{\rm PC} $, which is detailed in the corresponding proof.

\begin{proposition}[plateau phase for fully calibrated arrays]
\label{prop:prop_CRB_approx1}
Under assumptions A1-A3, when $ \Delta\omega\geqslant\Omega $, we have 
\begin{align}
\label{equ:approx_now1}
\left|\frac{\partial {\rm CRB}_{\rm FC}(\Delta\omega)}{\partial \Delta\omega}\right|\approx 0
\end{align}
with high probability.
\end{proposition}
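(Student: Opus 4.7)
My plan is to mirror the strategy already laid out for Proposition~\ref{prop:prop_CRB_approx2} (the partly calibrated case), but simplified because for fully calibrated arrays the nuisance parameter $\bm{\xi}$ is known and therefore does not enter the Fisher information matrix. The proof should reduce, under assumptions A1--A3, to showing that once $\Delta\omega\geqslant\Omega$ the Fisher information matrix for $\bm{\omega}$ becomes approximately \emph{block diagonal} with blocks that do not depend on $\Delta\omega$ in a first-order sense. From this, the CRB, being the inverse of that matrix, will also be approximately independent of $\Delta\omega$, so its derivative is negligible.

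Concretely, I would first write ${\rm CRB}_{\rm FC}(\bm{\omega})$ from Appendix~\ref{app:cal_CRB} in the standard form proportional to $\bigl[\operatorname{Re}\{(\bm{D}^H\bm{\Pi}_A^\perp\bm{D})\odot(\bm{S}^H\bm{S})^T\}\bigr]^{-1}$, where $\bm{D}$ collects the derivatives of the steering vectors with respect to $\omega_l$. Because the paper works with the average CRB, I would then extract the scalar quantity $\overline{\rm CRB}_{\rm FC}(\Delta\omega)=\tfrac12 \operatorname{tr}({\rm CRB}_{\rm FC})$ and differentiate it term by term using the identity
\begin{equation}
\frac{\partial}{\partial \Delta\omega}\bm{M}^{-1}=-\bm{M}^{-1}\frac{\partial \bm{M}}{\partial \Delta\omega}\bm{M}^{-1}.
\end{equation}
Because only $\omega_L$ depends on $\Delta\omega$ (with $\omega_1$ held fixed as in Fig.~\ref{fig:show}), only the $L$-th column/row of $\bm{D}$ and $\bm{A}$ change, which localizes the differentiation to a small set of scalar inner products $\bm{a}^H(\omega_1)\bm{a}(\omega_L)$, $\bm{a}^H(\omega_1)\bm{d}(\omega_L)$, etc.

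The key step is to show that these inner products, and hence the off-diagonal blocks of both $\bm{A}^H\bm{A}$ and $\bm{D}^H\bm{\Pi}_A^\perp\bm{D}$, are small once $\Delta\omega\geqslant\Omega$. Here I would invoke A1--A3 (large, sparse, uniformly distributed array) to treat the element positions $\{\varphi_n\}$ as a random sample and apply a concentration argument: sums of the form $\sum_n e^{j(\omega_L-\omega_1)\varphi_n}$ behave like incoherent complex exponentials that, with high probability, have magnitude of order $\sqrt{N}$ rather than $N$, provided $|\omega_L-\omega_1|\geqslant 2\pi/D$. This would make the cross terms $O(1/\sqrt{N})$ relative to the diagonal ones, and hence negligible under A1 ($N$ large). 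This argument -- pushing the beam-pattern sidelobe magnitude below the main lobe with high probability on the random position ensemble -- is the main obstacle, and is where the ``with high probability'' qualifier in the statement comes from.

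Finally, after showing that $\bm{M}(\Delta\omega)$ is, to leading order, a block-diagonal matrix whose blocks are sums over the array of quantities depending only on $\omega_1$ or only on $\omega_L$ individually (not on their difference), I would conclude that $\partial\bm{M}/\partial\Delta\omega$ has only $O(1/\sqrt{N})$ entries, so $\partial\overline{\rm CRB}_{\rm FC}/\partial\Delta\omega$ is $O(1/\sqrt{N})\approx 0$ with high probability, yielding \eqref{equ:approx_now1}. The rigorous statement promised in the proposition would then be an explicit bound of this form, with the approximation replaced by a probabilistic inequality whose constants depend on the array parameters and on the signal covariance $\bm{S}^H\bm{S}$.
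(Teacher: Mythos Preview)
Your strategy is essentially the paper's: the inner products you single out, $\bm{a}^H(\omega_1)\bm{a}(\omega_2)$, $\bm{a}^H(\omega_1)\bm{d}(\omega_2)$, $\bm{d}^H(\omega_1)\bm{d}(\omega_2)$, are (after normalisation) precisely the paper's intermediate variables $Q_0,Q_1,Q_2$ in \eqref{equ:threeQ}, and your concentration argument under A2 is exactly the Hoeffding bound of Lemma~\ref{lemma:Q} (Appendix~\ref{app:prop_4_1}). The paper's packaging is slightly different and worth noting: rather than arguing directly that the FIM is approximately block diagonal with $\Delta\omega$-independent blocks, it first shows in Lemma~\ref{lemma:FC} (Appendix~\ref{app:proof_of_prop3}) that $\bm{F}$ is \emph{exactly} a function of $Q_0,Q_1,Q_2$---including the diagonal entries, which pick up $Q$-dependence through $\bm{\Pi}_{\bm A}^\perp$---and then applies the chain rule $\partial{\rm CRB}_{\rm FC}/\partial\Delta\omega=(\partial{\rm CRB}_{\rm FC}/\partial Q)\cdot(\partial Q/\partial\Delta\omega)$ together with $|\partial Q_i/\partial\Delta\omega|\approx 0$ from Lemma~\ref{lemma:Q}. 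This buys a cleaner separation: smallness of the \emph{derivative} is needed (not just smallness of $Q$ itself), and the paper isolates that in Lemma~2.\ref{lemma:lemmaEQ2} by observing $\partial Q_0/\partial\Delta\omega\propto Q_1$. Your write-up jumps from ``cross terms are $O(1/\sqrt{N})$'' to ``$\partial\bm M/\partial\Delta\omega$ has $O(1/\sqrt{N})$ entries''; to make that step rigorous you should state explicitly that the $\Delta\omega$-derivatives of those cross terms are again weighted exponential sums of the same type (one power of $\varphi_n$ higher) and hence are controlled by the same concentration bound---which is exactly what the chain-rule-through-$Q$ formulation gives for free.
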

\begin{proof}
See Subsection \ref{subsec:the_proof2}.
\end{proof}

\subsubsection{Turning point of \texorpdfstring{${\rm CRB}_{\rm FC} $}{e}}

Based on Lemma~\ref{prop:prop1} and Proposition~\ref{prop:prop_CRB_approx1}, we use the criterion in \cite{fishler2006spatial} to determine the turning point of $ {\rm CRB}_{\rm FC} $ in a strict sense, given by
\begin{align}
\label{equ:CRB_FC_turningpoint}
\mathcal{T}\left({\rm CRB}_{\rm FC}\right)=\Omega=2\pi/D,
\end{align}
where the explanation is similar with that of $ {\rm CRB}_{\rm PC} $.

Based on the proposed criterion, \eqref{equ:CRB_FC_turningpoint} means that the resolution limit of fully calibrated arrays is inversely proportional to the whole array aperture $ D $, which is consistent with existing Rayleigh resolution limit.
This verifies the feasibility of the proposed criterion and supports to apply it on the resolution analysis of partly calibrated arrays.


Our main contributions on theoretically analyzing the declining phase, plateau phase and turning point of \ac{crb} are summarized in Table~\ref{Table:table1}.

\begin{table}[!htbp]
\caption{Our main contributions}
\label{Table:table1}
\renewcommand\arraystretch{1.5}
\centering
\begin{tabular}{c|c|c}
\hline
 & \textbf{fully calibrated} & \textbf{partly calibrated} \\
\hline
\textbf{declining phase} & Lemma~\ref{prop:prop1} \cite{Lee92} & Proposition~\ref{prop:prop2} \\
\hline
\textbf{plateau phase} & Proposition~\ref{prop:prop_CRB_approx1} & Proposition~\ref{prop:prop_CRB_approx2} \\
\hline
\textbf{turning point} & $ \Omega=2\pi/D $ & $ \Omega=2\pi/D $ \\
\hline
\end{tabular}
\end{table}

Note that the super-resolution phenomenon in the existing self-calibration methods does not contradict the conclusion of this paper.
This is because existing super-resolution algorithms \cite{malioutov2005sparse,donoho2006compressed} make additional prior assumptions about the scenario and model, whereas the signal model in this paper does not. 
For instance, sparse recovery algorithms assume that targets are sparsely located within the solution space, and subspace methods assume uncorrelated source signals. 
These assumptions introduce extra feature information compared to the classical model, thereby affecting the model's performance bounds, which manifest as improvements in resolution.
However, we base our study on the classical model assumptions without incorporating other prior assumptions such as sparsity, and therefore, it does not involve super-resolution performance.
 


\section{Proof of the propositions in Section~\ref{sec:indicate}}
\label{sec:analysis}

In this section, we prove Propositions~\ref{prop:prop_CRB_approx2} and \ref{prop:prop_CRB_approx1}, while the proof of Proposition \ref{prop:prop2} is left to Appendix \ref{app:proof_of_prop2} since it is a direct extension of \cite{Lee92}. 
First, we introduce assumptions A1-A3 in Subsection \ref{subsec:assumptions}, followed by the detailed proofs of Propositions~\ref{prop:prop_CRB_approx2} and \ref{prop:prop_CRB_approx1} in Subsection \ref{subsec:the_proof1} and \ref{subsec:the_proof2}, respectively.
Based on the  proofs above, we explain how to determine the turning point of \ac{crb} in Subsection \ref{subsec:turningpoint}.

\subsection{Assumptions}
\label{subsec:assumptions}

To analyze the angular resolution in fully and partly calibrated arrays, we impose the following assumptions (A3 is not necessary for fully  calibrated arrays.). A diagram is shown in Fig.~\ref{fig:assumption}.
\begin{itemize}
    \item[A1:] Consider $ L=2 $ sources with spatial frequencies denoted by $ \omega_1=\omega_0 $ and $ \omega_2=\omega_0+\Delta\omega $.
    \item[A2:] The average of the element positions of each subarray is uniformly distributed in $ [0,D] $, i.e., $ \bar{\varphi}_k\sim\mathcal{U}[0,D] $, where 
    \begin{align}
    \label{equ:varphik}
    \bar{\varphi}_k= \frac{\sum_{n\in\mathcal{N}_k}\varphi_n}{|\mathcal{N}_k|}.
    \end{align}
    Each subarray has the same number of elements, i.e., $ |\mathcal{N}_k|=\frac{N}{K} $, and the number of subarrays, $ K $, is large such that $ 1/K\approx 0 $. 
    \item[A3:] The interval between the array elements within a subarray is small relative to the whole distributed array, i.e., a subarray can be approximated as a point in the geometry.
    Particularly, assume $ \varphi_n\approx\bar{\varphi}_k $ for $ n\in\mathcal{N}_k $, such that
    \begin{align}
    \label{equ:average}
    \frac{1}{|\mathcal{N}_k|}\sum_{n\in\mathcal{N}_k} f(\varphi_n)\cdot g(e^{j\Delta\omega\varphi_n})\approx f(\bar{\varphi}_k)\cdot g(Q_0^k),
    \end{align}
    where 
    \begin{align}
    \label{equ:Q0k}
    Q_0^k=\frac{\sum_{n\in\mathcal{N}_k} e^{j\Delta\omega\varphi_n}}{|\mathcal{N}_k|},
    \end{align}
    and $ f(\cdot), g(\cdot) $ are any general polynomial functions.
\end{itemize}

\begin{figure}[!htbp]
\centering 
\includegraphics[width=3.2in]{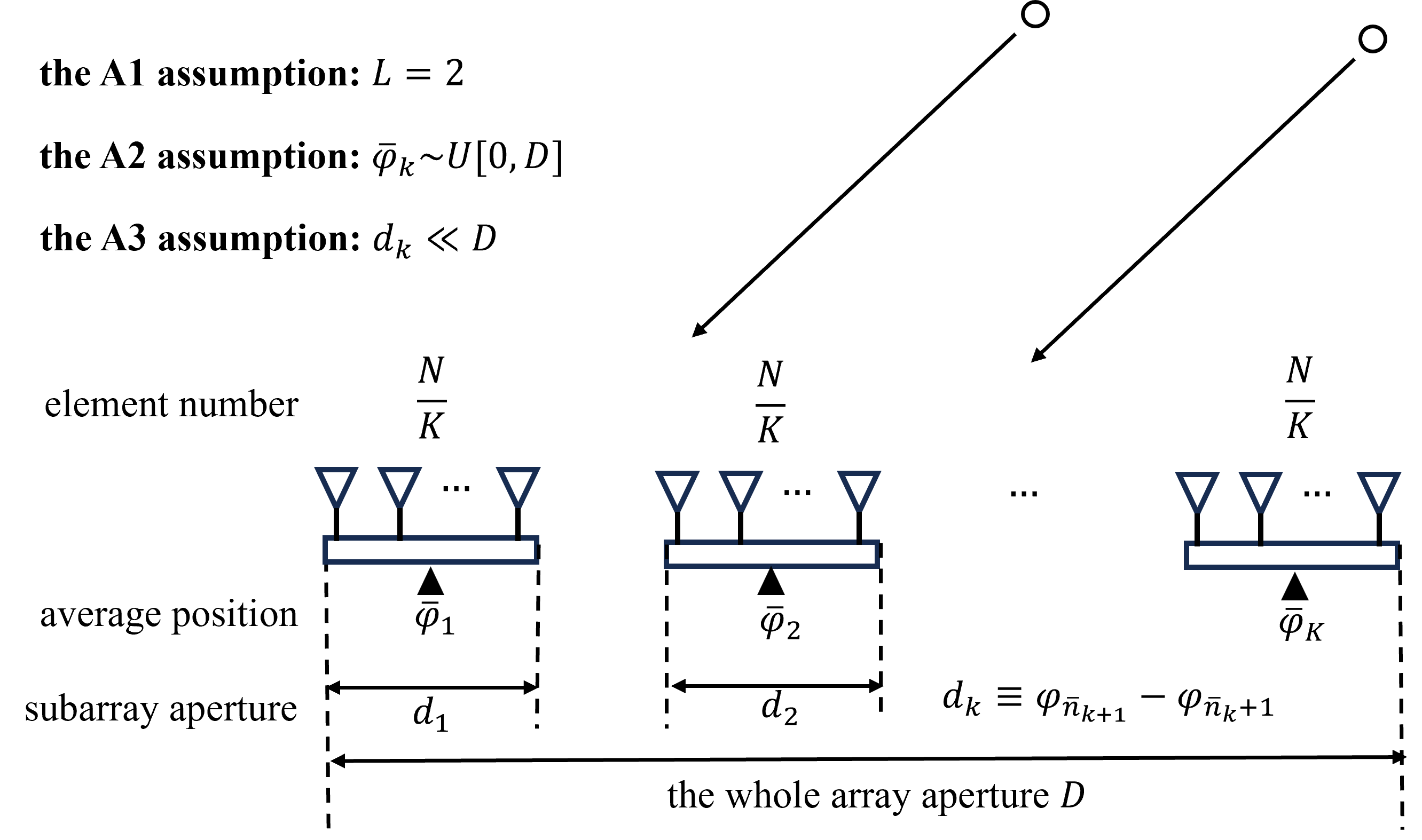}
\caption{Assumptions A1-A3 in partly calibrated arrays.}
\label{fig:assumption}
\end{figure}

In Assumption~A1, we mainly consider the case $ L=2 $ to simplify the expressions. 
In Assumption A2, uniformly distributed subarrays are common in practice.
Note that Assumption 2 can be extended to more general cases, such as the case where the inter-subarray position errors are bounded instead of being distributed on the whole array aperture.
These proofs are similar and we take Assumption A2 for example in this paper.
In Assumption A3, we consider that the apertures of the subarrays, $ d_k\equiv\varphi_{\bar{n}_{k+1}}-\varphi_{\bar{n}_{k}+1}, k=1,\dots,K $, are far less than the whole distributed array aperture $ D $, such that a subarray can be viewed as a point geometrically, i.e., $ 
\varphi_n\approx\bar{\varphi}_k $ for $ n\in\mathcal{N}_k $.
In an extreme case, substituting $ \varphi_n=\bar{\varphi}_k $ for $ n\in\mathcal{N}_k $ to \eqref{equ:average} yields the corresponding equation.
Therefore, closer intra-subarray displacements between $ \varphi_n, n\in\mathcal{N}_k $ implies smaller approximation errors in \eqref{equ:average}.
This means that assumption A3 actually corresponds to a large, sparsely distributed array sensor network, which is usually used for high angular resolution.

\subsection{Proof of Proposition~\ref{prop:prop_CRB_approx2}}
\label{subsec:the_proof1}

The proof of Proposition~\ref{prop:prop_CRB_approx2} is not direct due to the complex form of $ {\rm CRB}_{\rm PC} $ w.r.t. $ \Delta\omega $.
We then introduce intermediate variables w.r.t. $ 
\Delta\omega $, denoted by $ Q(\Delta\omega) $. We divide the proof of Proposition~\ref{prop:prop_CRB_approx2} into several tractable lemmas.
In the sequel, we first introduce the intermediate variables $ Q $ and the lemmas, and then show how Proposition~\ref{prop:prop_CRB_approx2} is proved with these lemmas.

The intermediate variables $ Q $ are defined as
\begin{align}
\label{equ:threeQ}
Q_{i}(\Delta\omega)&=\sum_{n=1}^N \varphi_n^{i}\cdot e^{j\Delta\omega\varphi_n}\bigg/\sum_{n=1}^N \varphi_n^i, \\
\label{equ:threeQk}
Q_{i}^k(\Delta\omega)&=\sum_{n\in\mathcal{N}_k} \varphi_n^{i}\cdot e^{j\Delta\omega\varphi_n}\bigg/\sum_{n\in\mathcal{N}_k} \varphi_n^i,
\end{align}
where $ i=0,1,2 $ and $ k=1,\dots,K $. 
In the sequel, we abbreviate $ Q_{i}(\Delta\omega) $ and $ Q_{i}^k(\Delta\omega) $ as $ Q_{i} $ and $ Q_{i}^k $, respectively.
The intermediate variables $ Q_i $ correspond to the whole distributed array, and $ Q_i^k $ correspond to the $ k $-th subarray.
Our theoretical results are mainly about $ Q_i $, while $ Q_i^k$ are used for intermediate derivations.
These intermediate variables are all bounded by 1. Abandoning extreme scenarios ($ e^{j\Delta\omega\varphi_n}=1,\ n=1,\dots,N $), we assume
\begin{align}
\label{equ:Q<1}
|Q_i|<1,\ |Q_i^k|<1.
\end{align}
}

With the help of $ Q $, we rewrite the \ac{crb} w.r.t.  $ Q $ instead of $\Delta\omega$, facilitating the analysis of the plateau phase.
This is feasible because the proof of Lemma \ref{lemma:Q} and \ref{lemma:PC} is equivalent to that of Proposition~\ref{prop:prop_CRB_approx2}:
\begin{align}
{\rm Proposition}\ \ref{prop:prop_CRB_approx2} \Longleftrightarrow {\rm Lemma}\ \ref{lemma:Q}+{\rm Lemma}\ \ref{lemma:PC}, \nonumber
\end{align}

\noindent and the lemmas are shown as follows.
Lemma \ref{lemma:Q} shows that the intermediate variables $ Q $ and their derivatives w.r.t. $ \Delta\omega $ tend to be zero when $ \Delta\omega\geqslant\Omega $.
The approximation to zero in Lemma \ref{lemma:Q} is a rough but intuitive expression, and the more rigorous expression is shown in the proof. 
Lemma~\ref{lemma:PC} means that the main influence of $ \Delta\omega $ on $ 
{\rm CRB}_{\rm PC} $ is embodied by $ Q(\Delta\omega) $, which supports the analysis of how $ Q(\Delta\omega) $ affects $ {\rm CRB}_{\rm PC} $ instead.

\begin{lemma}
\label{lemma:Q}
Under assumptions A1-A3, when $ \Delta\omega\geqslant\Omega $, we have
\begin{align}
\label{equ:lemma_Q}
\left|Q(\Delta\omega)\right|\approx 0, \ \left|\frac{\partial Q(\Delta\omega)}{\partial \Delta\omega}\right|\approx 0.
\end{align}
\end{lemma}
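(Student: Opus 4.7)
The plan is to reduce the sums defining $Q_i$ (and $Q_i^k$) to deterministic integrals over $[0,D]$ by applying assumptions A3 and A2 in succession, then to bound those integrals explicitly for $\Delta\omega\ge\Omega=2\pi/D$ using elementary oscillatory-integral estimates.

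First I would use A3 to collapse the inner sum over each subarray: since $\varphi_n\approx\bar{\varphi}_k$ for $n\in\mathcal{N}_k$, the summand $\varphi_n^{i}e^{j\Delta\omega\varphi_n}$ is approximated by $\bar{\varphi}_k^{i}e^{j\Delta\omega\bar{\varphi}_k}$ throughout $\mathcal{N}_k$, exactly the pattern captured by \eqref{equ:average} with $f(x)=x^i$ and $g(e^{j\Delta\omega x})=e^{j\Delta\omega x}$. Combining with $|\mathcal{N}_k|=N/K$ from A2, the numerator and denominator of $Q_i$ both collapse onto sums over the $K$ subarray centers. Next I would invoke A2 (large $K$ with $\bar{\varphi}_k$ i.i.d.\ uniform on $[0,D]$) and the law of large numbers to replace these sample averages by the corresponding expectations, producing the deterministic surrogate
\begin{equation}
Q_i(\Delta\omega)\;\approx\;\frac{i+1}{D^{\,i+1}}\int_0^{D}x^{i}e^{j\Delta\omega x}\,dx.
\end{equation}
A parallel derivation, obtained by differentiating under the sum and repeating the two reductions, gives
\begin{equation}
\frac{\partial Q_i(\Delta\omega)}{\partial\Delta\omega}\;\approx\;\frac{j(i+1)}{D^{\,i+1}}\int_0^{D}x^{i+1}e^{j\Delta\omega x}\,dx.
\end{equation}

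Once in this closed form, the remaining task is routine. Integration by parts $i+1$ times shows $\left|\int_0^D x^{i}e^{j\Delta\omega x}dx\right|=O(1/\Delta\omega)$ uniformly in $D$, with boundary terms that are harmless for $i\in\{0,1,2\}$. At the threshold $\Delta\omega=\Omega$ one has $e^{j\Delta\omega D}=1$, which kills the leading boundary term (for $i=0$ it gives $Q_0=0$ exactly, and for $i=1,2$ the ratio is an absolute constant of order $1/\pi$); for $\Delta\omega>\Omega$ the $1/\Delta\omega$ factor only improves the bound. Since the baseline value $Q_i(0)=1$, these bounds establish the intended "$|Q_i|\approx 0$," and the same estimate on the $(i{+}1)$-st moment integral controls $|\partial Q_i/\partial\Delta\omega|$.

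The main obstacle is quantifying the two approximation steps so that the composite statement holds \emph{with high probability}, matching the phrasing used in Proposition~\ref{prop:prop_CRB_approx2}. For the A2 step I would attach a concentration inequality (Hoeffding on the real and imaginary parts of $\bar{\varphi}_k^{i}e^{j\Delta\omega\bar{\varphi}_k}$, whose summands are bounded by $D^i$), yielding a deviation of order $1/\sqrt{K}$ that is absorbed into the "$\approx 0$" as $K$ grows under A2. For the A3 step I would expand $e^{j\Delta\omega\varphi_n}=e^{j\Delta\omega\bar{\varphi}_k}(1+j\Delta\omega(\varphi_n-\bar{\varphi}_k)+\ldots)$ and bound the remainder by $\Delta\omega\cdot\max_{n\in\mathcal{N}_k}|\varphi_n-\bar{\varphi}_k|$, which is $O(d_k/D)$ in the sparse-array regime of A3 and therefore negligible compared to the baseline $|Q_i(0)|=1$. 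Together these two error budgets justify the rough "$\approx 0$" claim, with the precise bounds deferred to the body of the proof as promised after the lemma statement.
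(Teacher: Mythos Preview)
Your proposal is correct and follows essentially the same strategy as the paper: concentrate $Q_i$ around its expectation via Hoeffding on the real and imaginary parts, then observe that the expectation is a sinc-type integral that is $O(1/(\Delta\omega D))$ once $\Delta\omega D\ge 2\pi$; for the derivative the paper does exactly what you do, writing $\partial Q_0/\partial\Delta\omega=(\sum_n\varphi_n/N)\,Q_1$ and recycling the bound on $Q_1$. Two small deltas are worth noting. First, you collapse via A3 to the $K$ subarray centers and then apply Hoeffding over those $K$ i.i.d.\ samples, which yields a tail $\exp(-cKt^2)$; the paper instead applies Hoeffding directly over all $N$ element positions (implicitly identifying each $\varphi_n$ with its $\bar{\varphi}_k$ via A3) and quotes the sharper $4e^{-Nt^2/2}$, which is the rate used in the numerical illustration following Lemma~\ref{lemma:Q}. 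Your decomposition is more faithful to the literal statement of A2; the paper's buys the stronger exponent. Second, a minor slip: $\left|\int_0^D x^i e^{j\Delta\omega x}\,dx\right|$ is not $O(1/\Delta\omega)$ uniformly in $D$ for $i\ge 1$ (the boundary term is $O(D^i/\Delta\omega)$); what you actually need, and what your normalization by $(i{+}1)/D^{i+1}$ delivers, is that $|Q_i|=O(1/(\Delta\omega D))$.
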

\begin{proof}
See Appendix \ref{app:prop_4_1}.
\end{proof}

Here we give an intuitive explanation of how $ |Q(\Delta\omega)| $ is close to 0 in Lemma~\ref{lemma:Q}. 
From Lemma 2.~\ref{lemma:lemmaEQ} in Appendix~\ref{app:prop_4_1}, when $ t=0.25 $, $ \Delta\omega D=2\pi $ and $ N=200 $, we have 
\begin{align}
P\left(|Q_0|\geqslant 0.35\right)\leqslant 0.0077.
\end{align} 
The above conclusion can be extended to the cases of $ \Delta\omega D\geqslant 2\pi $.
Therefore, when $ N $ is large enough, we have $ |Q(\Delta\omega)|\approx 0 $ with high probability, yielding that Proposition~\ref{prop:prop_CRB_approx2} also holds with high probability.

\begin{lemma}
\label{lemma:PC}
Under assumptions A1-A3, when $ \Delta\omega\geqslant\Omega $, we have
\begin{align}
\label{equ:lemma_PC}
{\rm CRB}_{\rm PC}(\Delta\omega)\approx {\rm CRB}_{\rm PC}(Q(p\Delta\omega)),
\end{align}
where $ |p|\geqslant 1, p\in\mathbb{Z} $. 
\end{lemma}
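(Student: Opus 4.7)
The plan is to expand the explicit formula for $ {\rm CRB}_{\rm PC} $ given in Appendix~A and show, term by term, that every occurrence of $ \Delta\omega $ can be encoded through the intermediate variables $ Q_i(p\Delta\omega) $. Because A1 fixes $ L=2 $, the analysis reduces to tracking a finite number of $ 2\times 2 $ blocks (for the $ \bm{\omega}\bm{\omega} $-block) and $ (K-1)\times(K-1) $ blocks (for the $ \bm{\xi}\bm{\xi} $-block) of the Fisher information matrix, and their cross-coupling.

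First, I would substitute $ \omega_1=\omega_0 $ and $ \omega_2=\omega_0+\Delta\omega $ into the steering matrix $ \bm{A} $, its derivatives $ \bm{D}_\omega=\partial\bm{A}/\partial\bm{\omega} $, and its derivatives $ \bm{D}_\xi=\partial\bm{A}/\partial\bm{\xi} $. In every inner product of the form $ \bm{a}(\omega_l)^H\bm{a}(\omega_{l'}) $, $ \bm{a}(\omega_l)^H\,\partial\bm{a}(\omega_{l'})/\partial\omega_{l'} $, and so on, the $ \omega_0 $-dependent global phase $ e^{j\omega_0\varphi_n} $ appears in conjugate pairs and cancels, leaving sums of the form $ \sum_n\varphi_n^{\,i}\,e^{j(l'-l)\Delta\omega\varphi_n} $ with $ i\in\{0,1,2\} $ and $ l'-l\in\{-1,0,1\} $. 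After normalization by $ \sum_n\varphi_n^{\,i} $, each such sum is by definition $ Q_i\!\left((l'-l)\Delta\omega\right) $, per \eqref{equ:threeQ}.

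Second, for the blocks specific to the partly calibrated model, $ \partial\bm{a}(\omega_l)/\partial\xi_k $ is supported only on elements of the $ k $-th subarray, so per-subarray sums $ \sum_{n\in\mathcal{N}_k}\varphi_n^{\,i}\,e^{\pm j\Delta\omega\varphi_n} $ arise naturally. Applying A3 (equation \eqref{equ:average}) with $ f(\varphi)=\varphi^{\,i} $ and $ g(z)=z $ collapses each such sum to $ |\mathcal{N}_k|\,\bar{\varphi}_k^{\,i}\,Q_0^k(\Delta\omega) $ (i.e.\ into $ Q_i^k $-type quantities via \eqref{equ:threeQk}). Summing over $ k $ using A2 aggregates these per-subarray terms into combinations of the global $ Q_i(\Delta\omega) $ via the empirical averages $ \frac{1}{K}\sum_k\bar{\varphi}_k^{\,i} $ replacing moment integrals on $ [0,D] $.

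Third, I would compute the two Schur complements required to isolate the $ \bm{\omega} $-block of the inverse FIM: one to eliminate the signal-covariance nuisance parameters and one to eliminate $ \bm{\xi} $. Since each constituent block is, by Steps~1--2, a rational function of the variables $ Q_i(\pm\Delta\omega) $ and $ \omega_0 $-independent geometric constants, the resulting $ 2\times 2 $ matrix $ {\rm CRB}_{\rm PC} $ is itself a rational function of the family $ \{Q_i(p\Delta\omega)\} $ for a finite set of integer $ p $ with $ |p|\geqslant 1 $, with no other $ \Delta\omega $-dependence. Applying \eqref{equ:average_CRB} preserves this property, yielding \eqref{equ:lemma_PC}.

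The main obstacle is the Schur-complement bookkeeping. With three parameter blocks, two rounds of inversion are required, and cross-products between blocks must be tracked carefully so that no residual factor $ e^{j\Delta\omega\varphi_n} $ escapes being absorbed into some $ Q_i(p\Delta\omega) $. A secondary technical point is the uniformity of the A3 approximation: one must verify that the per-block approximation errors propagate in a controlled way through the Schur complements, so that the final ``$ \approx $'' in \eqref{equ:lemma_PC} inherits the same approximation quality as \eqref{equ:average}. Once that combinatorial step is handled, the remaining manipulations are purely algebraic and can be combined with Lemma~\ref{lemma:Q} to deduce Proposition~\ref{prop:prop_CRB_approx2}.
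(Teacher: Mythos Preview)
Your proposal captures the overall architecture but misses a genuine obstacle that the paper must work around. The derivative block $\bm{H}$ with respect to $\bm{\xi}$ (see \eqref{equ:Bxi}) carries the factor $\omega_l$ \emph{multiplicatively}, not only inside a phase: $\widetilde{\bm{B}}_{\bm{\xi}}^k = j\sum_l s_l\,\omega_l\,\bm{a}_k(\omega_l)$. Consequently, inner products such as $\bm{D}^H\bm{H}$ and $\bm{H}^H\bm{H}$ depend on the ratio $\mu = 1+\Delta\omega/\omega_0$ in addition to the exponential sums, and this $\mu$ is a $\Delta\omega$-dependence that is \emph{not} of the form $Q_i(p\Delta\omega)$. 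Your Step~3 claim that the Schur complement is a rational function of the $Q$-variables ``with no other $\Delta\omega$-dependence'' therefore fails as stated. The paper handles this by Taylor-expanding the rational expression for $[\mathcal{G}_1]_{1,1}$ in the bounded variable $s_0^k$, obtaining coefficients $u_p(\mu)$ (equation \eqref{equ:up}), and then proving a separate Lemma~\ref{lemma:lemma_gradient_up} showing that $|u_p'|\leqslant 1$ and $|u_p'|\to 0$ as $|\mu|\to\infty$, which justifies freezing $u_p$ at $u_p(\bar\mu)$ for $\Delta\omega\geqslant\Omega$. Without this step your argument does not close.

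A second gap concerns the inversion of $\bm{G}$. You treat the aggregation over $k$ as if $\bm{G}^{-1}$ were essentially diagonal, but $\bm{G}$ is a full $(K-1)\times(K-1)$ matrix through the rank-two perturbation $\bm{H}^H\bm{A}\bm{A}^H\bm{H}/N$. The paper isolates this in Lemma~\ref{lemma:lemmaBB}, proving $\bm{G}^{-1}\approx\widehat{\bm{G}}_1+\widehat{\bm{G}}_2$ with $\widehat{\bm{G}}_1$ diagonal and $\widehat{\bm{G}}_2$ rank-two; the approximation uses A2's large-$K$ hypothesis to kill the residual $\bar{\bm{B}}$. Only after this structural simplification can $\bm{M}\bm{G}^{-1}\bm{M}^T$ be split into $\mathcal{G}_1+\mathcal{G}_2$ and passed through A3 in \emph{both} directions: first to replace intra-subarray sums by $(\bar\varphi_k,Q_0^k)$, then---after the $s_0^k$-expansion and the freezing of $u_p$---to convert $\sum_k|\mathcal{N}_k|\,\tilde f(\bar\varphi_k)\,\tilde g(Q_0^k)$ back to $\sum_n\tilde f(\varphi_n)\,\tilde g(e^{j\Delta\omega\varphi_n})$, which finally yields the global $Q_i(p\Delta\omega)$ with $|p|\leqslant 5$. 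Your Step~2 describes only the forward direction of A3 and leaves the $k$-aggregation (and hence the emergence of global $Q$'s) unjustified.
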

\begin{proof}
See Appendix \ref{app:proof_of_prop4}.
\end{proof}

Based on the chain rule of partial derivatives, we have
\begin{align}
\label{equ:chain_rule}
\frac{\partial {\rm CRB}_{\rm PC}(\Delta\omega)}{\partial \Delta\omega}=\frac{\partial {\rm CRB}_{\rm PC}(\Delta\omega)}{\partial Q(p\Delta\omega)}\cdot\frac{\partial Q(p\Delta\omega)}{\partial \Delta\omega}.
\end{align}
Substituting $ {\rm CRB}_{\rm PC}(\Delta\omega)\approx {\rm CRB}_{\rm PC}(Q(p\Delta\omega)) $ in Lemma \ref{lemma:PC} to \eqref{equ:chain_rule} yields that 
\begin{align}
\label{equ:chain_rule_recast}
\frac{\partial {\rm CRB}_{\rm PC}(\Delta\omega)}{\partial \Delta\omega}\approx\frac{\partial {\rm CRB}_{\rm PC}(Q(p\Delta\omega))}{\partial Q(p\Delta\omega)}\cdot p\frac{\partial Q(p\Delta\omega)}{\partial p\Delta\omega}.
\end{align}
Under assumptions A1-A3, the \ac{fim} of \ac{crb} is not singular, yielding $ \left|\frac{\partial {\rm CRB}_{\rm PC}(Q(p\Delta\omega))}{\partial Q(p\Delta\omega)}\right|\leqslant C $, where $ C $ is some constant.
Based on Lemma~\ref{lemma:Q}, we have $ \left|\partial Q(p\Delta\omega)/\partial p\Delta\omega \right|\approx 0 $ when $ \Delta\omega\geqslant\Omega/p $, which is also satisfied for $ \Delta\omega\geqslant\Omega $ since $ |p|\geqslant 1 $.
Therefore, when $ \Delta\omega\geqslant\Omega $ we have
\begin{align}
\label{equ:lemma1_final}
\left|\frac{\partial {\rm CRB}_{\rm PC}(\Delta\omega)}{\partial \Delta\omega}\right|\approx 0,
\end{align}
completing the proof.

\subsection{Proof of Proposition~\ref{prop:prop_CRB_approx1}}
\label{subsec:the_proof2}

The proof of Proposition~\ref{prop:prop_CRB_approx1} is similar to the counterpart of Proposition~\ref{prop:prop_CRB_approx2}, except that Lemma~\ref{lemma:PC} is replaced by Lemma~\ref{lemma:FC}, given by:
\begin{align}
{\rm Proposition}\ \ref{prop:prop_CRB_approx1} \Longleftrightarrow {\rm Lemma}~\ref{lemma:Q}+{\rm Lemma}~\ref{lemma:FC}, \nonumber
\end{align}

\noindent where Lemma \ref{lemma:FC} demonstrates that the main influence of $ 
\Delta\omega $ on $ {\rm CRB}_{\rm FC} $ is embodied by $ Q(\Delta\omega) $:

\begin{lemma}
\label{lemma:FC}
Under assumptions A1-A3, when $ \Delta\omega\geqslant\Omega $, we have
\begin{align}
\label{equ:lemma_FC}
{\rm CRB}_{\rm FC}(\Delta\omega)= {\rm CRB}_{\rm FC}(Q(\Delta\omega)).
\end{align}
\end{lemma}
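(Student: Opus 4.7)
The plan is to show that, in the $L=2$ regime fixed by Assumption~A1, every $\Delta\omega$-dependent ingredient of ${\rm CRB}_{\rm FC}$ factors through the three scalars $Q_0,Q_1,Q_2$ defined in \eqref{equ:threeQ}. Unlike Lemma~\ref{lemma:PC}, whose conclusion is only ``$\approx$'', the statement here is an \emph{exact} equality, which is natural because the fully calibrated model carries no nuisance parameters $\bm{\xi}$ that would need to be profiled out; so A1 is the essential hypothesis and no approximation of A2--A3 should enter.

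First, I would open the standard closed-form CRB expression from Appendix~\ref{app:cal_CRB} (the Slepian--Bangs / Stoica--Nehorai form) and isolate the $\Delta\omega$-dependence. It enters only through the steering matrix $\bm{A}=[\bm{a}(\omega_1),\bm{a}(\omega_2)]$ and its column-wise derivative $\bm{D}=[\bm{d}_1,\bm{d}_2]$, with $[\bm{d}_l]_n = j\varphi_n e^{j\omega_l\varphi_n}$. All such dependence is absorbed into the three $2\times 2$ Gram blocks $\bm{A}^H\bm{A}$, $\bm{A}^H\bm{D}$, $\bm{D}^H\bm{D}$, because the remaining ingredients ($\sigma^2$, snapshot count $T$, source covariance) do not depend on $\Delta\omega$.

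Second, I would evaluate those blocks. With $\omega_2-\omega_1=\Delta\omega$, the diagonal entries collapse to $\Delta\omega$-independent constants, namely $[\bm{A}^H\bm{A}]_{ll}=N$, $[\bm{A}^H\bm{D}]_{ll}=j\sum_n\varphi_n$, and $[\bm{D}^H\bm{D}]_{ll}=\sum_n\varphi_n^2$, while the off-diagonal entries become
\begin{align}
[\bm{A}^H\bm{A}]_{12} &= \sum_n e^{j\Delta\omega\varphi_n} = N\, Q_0, \nonumber \\
[\bm{A}^H\bm{D}]_{12} &= j\sum_n \varphi_n\, e^{j\Delta\omega\varphi_n} = j\Big(\sum_n \varphi_n\Big) Q_1, \nonumber \\
[\bm{D}^H\bm{D}]_{12} &= \sum_n \varphi_n^2\, e^{j\Delta\omega\varphi_n} = \Big(\sum_n \varphi_n^2\Big) Q_2, \nonumber
\end{align}
with their $(2,1)$ counterparts being the corresponding complex conjugates. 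Hence every $\Delta\omega$-dependent scalar appearing in the CRB is affine in $(Q_0,Q_1,Q_2)$, with coefficients determined entirely by the (known) array geometry.

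Third, I would substitute these expressions into the projector $\bm{A}(\bm{A}^H\bm{A})^{-1}\bm{A}^H$ and into the bilinear form $\bm{D}^H\Pi_{\bm{A}}^\perp\bm{D}$ that drives the CRB. Inverting a $2\times 2$ matrix whose entries are affine in $Q_0$ produces a rational function of $Q_0$ only, and contracting back with the other Gram blocks yields a rational expression in $(Q_0,Q_1,Q_2)$ alone, which is exactly the claim ${\rm CRB}_{\rm FC}(\Delta\omega)={\rm CRB}_{\rm FC}(Q(\Delta\omega))$. The main obstacle is purely organisational bookkeeping: I must verify that no $\Delta\omega$ escapes outside the $Q_i$'s and that no power of $\varphi_n$ higher than $\varphi_n^2$ is ever required. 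The latter is automatic for $L=2$ because the FIM contains only first-order derivatives of $\bm{a}(\omega)$, contributing at most one factor $\varphi_n$ on each side of an inner product; this is precisely why the triple $(Q_0,Q_1,Q_2)$, and no further $Q_i$, suffices.
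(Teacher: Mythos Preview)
Your proposal is correct and follows essentially the same route as the paper: the paper's proof in Appendix~\ref{app:proof_of_prop3} explicitly writes $\bm{\Pi}_{\bm{A}}^{\perp}$ in terms of $Q_0$ and then expands $\bm{F}={\rm Re}\{\bm{D}^H\bm{\Pi}_{\bm{A}}^{\perp}\bm{D}\}$ into a sum of three $2\times 2$ blocks whose entries are polynomial in $Q_0,Q_1,Q_2$ (Hadamard-multiplied by the source matrix $\bm{S}$), which is exactly your Gram-block computation carried out one stage further. Your observation that only A1 is actually used (the identity is exact, not approximate) matches the paper's derivation as well; the only minor slip is that $\bm{A}^H\bm{D}$ is not Hermitian, so its $(2,1)$ entry is $j(\sum_n\varphi_n)Q_1^*$ rather than the conjugate of the $(1,2)$ entry, but this does not affect the conclusion since it is still a function of $Q_1$ alone.
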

\begin{proof}
See Appendix \ref{app:proof_of_prop3}.
\end{proof}

Combining Lemma \ref{lemma:Q} and \ref{lemma:FC} yields Proposition~\ref{prop:prop_CRB_approx1}.
The proof is the same as the counterpart of Proposition~\ref{prop:prop_CRB_approx2} and is thus omitted here.

\subsection{Determining the turning point of CRB}
\label{subsec:turningpoint}

We explain how to determine the turning of \ac{crb}. 
From the above propositions, we know that when $ \Delta\omega\ll\Omega $, $ {\rm CRB}_{\rm FC} $ and $ {\rm CRB}_{\rm PC} $ decline rapidly as $ \Delta\omega $ increases, and plateau out when $ \Delta\omega\geqslant\Omega $.
A criterion is in demand to distinguish the plateau phase and the declining phase in a strict sense, which implies the turning point of \ac{crb}.

As the \ac{crb} curve has the identical trend with the intermediate variables $ Q(\Delta\omega) $, by observing the structure of $ Q(\Delta\omega) $ defined in  \eqref{equ:threeQ}, we use the criterion 
\begin{equation}
    \Delta\omega \varphi_N > 2\pi
\end{equation}
to determine the turning point of \ac{crb}. Consequently, the turning point is located at 
\begin{equation}
\label{equ:criterion}
    \Delta\omega=2\pi/\varphi_N=\Omega. 
\end{equation}



This criterion is inspired by \cite{fishler2006spatial}, which considers a similar problem that distinguishes the correlated and uncorrelated signals, detailed as follows:
Denote the correlation of two signals by $ E(\Delta f) $, 
\begin{align}
\label{equ:Eij}
E(\Delta f)=\frac{1}{\Delta x}\int_{-\frac{\Delta x}{2}}^ {\frac{\Delta x}{2}} e^{-j \Delta fx}dx,
\end{align}
where $ \Delta f $ is the frequency difference between two signals.
The signals are regarded as correlated if $ E(\Delta f) $  is close to 1 or regarded as uncorrelated if $ E(\Delta f) $ approaches 0.
In \cite{fishler2006spatial}, it is explained that a sufficient condition for $ |E(\Delta f)| $ to be much less than one is that the integrand completes at least one cycle or equivalently 
\begin{align}
\label{equ:Eij_criterion}
\Delta f\Delta x> 2\pi.
\end{align}
We use the similarity between $ Q(\Delta\omega) $ in \eqref{equ:threeQ} and $ E(\Delta f) $ in \eqref{equ:Eij} and determine the turning point of \ac{crb} as \eqref{equ:criterion}.

Finally, we give an intuitive explanation of this criterion applied in $ Q(\Delta\omega) $.
Take $ Q_0=\sum_{n=1}^N e^{j\Delta\omega\varphi_n}/N $ as an example.
When $ \Delta\omega\ll\Omega $, we have $ \Delta\omega\varphi_N\ll 2\pi $, which means that the phases of $ e^{j\Delta\omega\varphi_n} $ are centralized in a small range of $ [0,2\pi) $, yielding large $ |Q_0| $.
When $ \Delta\omega\geqslant\Omega $, we have $ \Delta\omega\varphi_N\geqslant 2\pi $, which means that the phases of $ e^{j\Delta\omega\varphi_n} $ are distributed in $ [0,2\pi) $ and vectors in different directions cancel each other, yielding small $ |Q_0| $.
A diagram w.r.t. the phases of $ e^{j\Delta\omega\varphi_n} $ and $ Q_0 $ is shown in Fig.~\ref{fig:phasse}
Since $ Q_1,Q_2 $ are the weighted extension of $ Q_0 $, they also approximately have the above characteristics.

\begin{figure}[!htbp]
\centering
\subfigure[$ \Delta\omega\ll\Omega $.]{
\label{subfigure:phasea}
\includegraphics[width=1.6in]{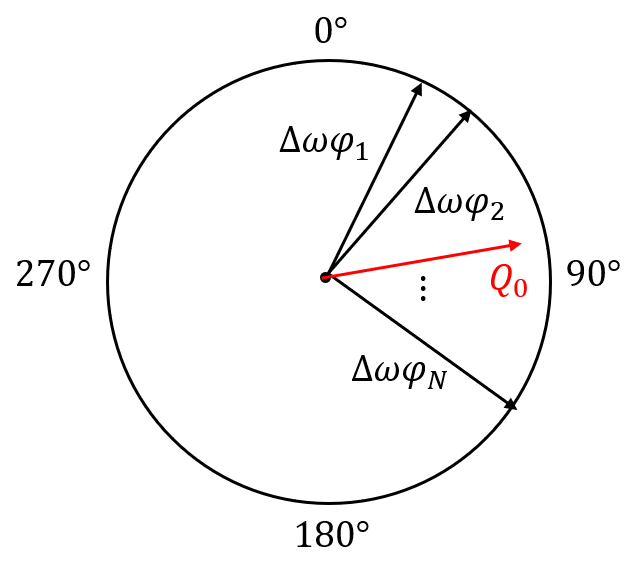}}
\subfigure[$ \Delta\omega\geqslant\Omega $.]{
\label{subfigure:phaseb}
\includegraphics[width=1.6in]{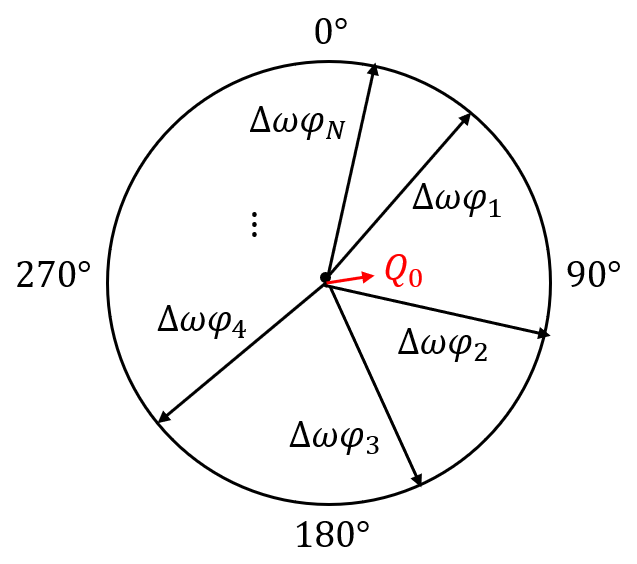}}
\caption{The phases of $ e^{j\Delta\omega\varphi_n} $ and $ Q_0 $.}
\label{fig:phasse}
\end{figure}

\section{Simulations}
\label{sec:simulation}

In this section, we first show the declining and plateau phases of $ {\rm CRB}_{\rm FC} $ and $ {\rm CRB}_{\rm PC} $ by the simulation results in Subsection \ref{subsec:example}, supporting the \ac{crb} analysis in Section~\ref{sec:indicate}.
We present that the turning points of $ {\rm CRB}_{\rm FC} $ and $ {\rm CRB}_{\rm PC} $ are not sensitive to \ac{snr} in Subsection~\ref{subsec:snr_sensitive}.
We then give the approximation errors in the proof of Lemma~\ref{lemma:PC} in Subsection \ref{subsec:approximation_all} to verify the feasibility of the approximation.
Finally, we explain that high angular resolution is achievable for both fully and partly calibrated arrays by subspace based algorithms in Subsection~\ref{subsec:algorithm}, verifying our main conclusion.

\subsection{Verification of the CRB analysis in Section~\ref{sec:indicate}}
\label{subsec:example}

We show the declining and plateau phases of $ {\rm CRB}_{\rm FC} $ and $ {\rm CRB}_{\rm PC} $ w.r.t. $ \Delta\omega $ by simulations to verify the theoretical analysis of \ac{crb} in Section~\ref{sec:indicate}.

Consider $ K=10 $ half-wavelength uniform linear subarrays, each composed of 10 elements, yielding $ N=100 $. These subarrays are uniformly spaced on a straight line with $ \xi_k=I(k-1)\lambda $ for $ k=1,\dots,K $, where $ I>0 $ reflects the size of interval between subarrays and is set as $ I=50 $.
The wavelength of the received signals is $ \lambda=1m $, and the resolution limit of $ \bm{\omega} $ is thus $ \Omega=0.014{\rm m}^{-1} $.
The geometry of the distributed array is shown in Fig.~\ref{fig:geometry}.

\begin{figure}[!htbp]
\centering
\includegraphics[width=3.2in]{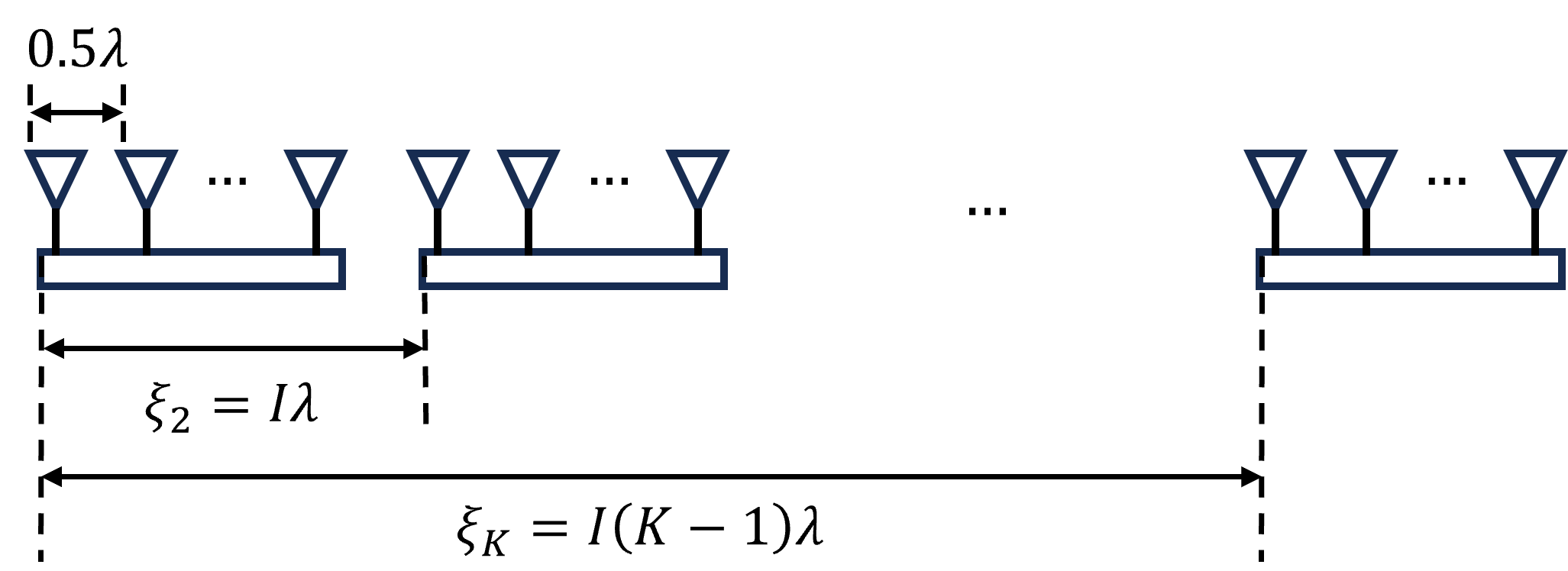}
\caption{The geometry of distributed arrays.}
\label{fig:geometry}
\end{figure}

The directions $ \bm{\theta} $ are uniformly in the range of $ [\theta_{\min},\theta_{\max}] $, i.e., $ \theta_l=\theta_{\min}+(l-1)(\theta_{\max}-\theta_{\min})/(L-1) $ for $ l=1,\dots,L $.
In this case, we have $ \Delta\omega=2\pi(\sin(\theta_{\max})-\sin(\theta_{\min}))/\lambda $, and $ \Delta\omega/(L-1) $ denotes the minimum separation between sources.
We set $ \theta_{\min}=1.2^{\circ} $.
The complex coefficients $ \bm{s} $ are set as $ s_l=e^{j\pi/5} $ for $ l=1,\dots,L $.
The \ac{snr} is defined as $ 1/\sigma^2 $ being 20dB.
We consider $ L=2,3,4 $, and show $ {\rm CRB}_{\rm FC} $ and $ {\rm CRB}_{\rm PC} $ w.r.t. $ (\Delta\omega/(L-1))/\Omega $ with a logarithmic coordinate in Fig.~\ref{fig:example}.


\begin{figure}[!htbp]
\centering
\includegraphics[width=3.2in]{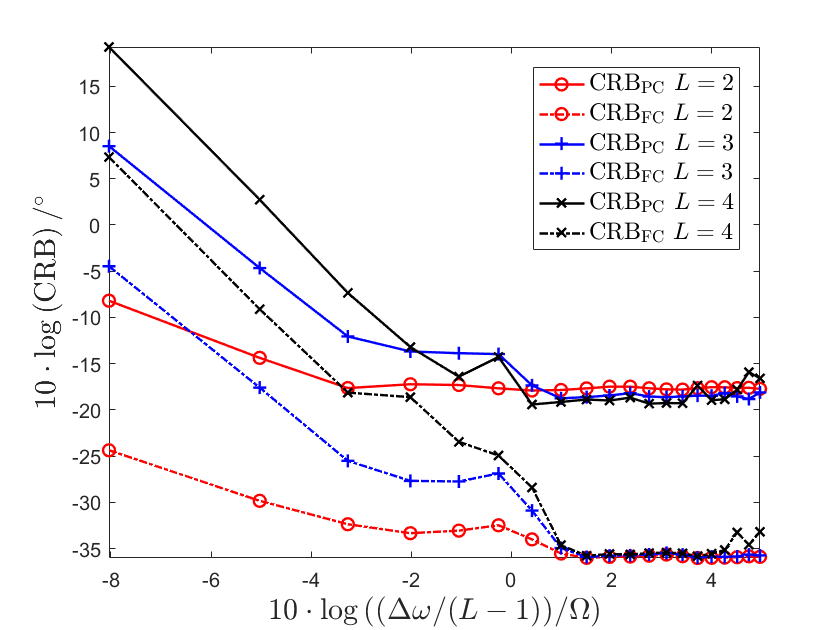}
\caption{$ {\rm CRB}_{\rm FC} $ and $ {\rm CRB}_{\rm PC} $ w.r.t. $ (\Delta\omega/(L-1))/\Omega $ for $ L=2,3,4 $.}
\label{fig:example}
\end{figure}

From Fig.~\ref{fig:example}, when $ \Delta\omega/(L-1)\ll\Omega $, we find that the slopes of $ {\rm CRB}_{\rm FC} $ and $ {\rm CRB}_{\rm PC} $ w.r.t. $ \Delta\omega $ in the $ L=2,3,4 $ cases are close to $ -2,-4,-6 $ in the logarithmic coordinate, respectively.
This verifies the conclusions in Lemma~\ref{prop:prop1} and Proposition~\ref{prop:prop2} that $ {\rm CRB}_{\rm FC} $ and $ {\rm CRB}_{\rm PC} $ are mainly proportional to $ (\Delta\omega)^{-2(L-1)} $ when $ \Delta\omega\ll\Omega $, which correspond to the declining phase of \ac{crb}.
When $ \Delta\omega/(L-1)\geqslant\Omega $, we find that $ {\rm CRB}_{\rm FC} $ and $ {\rm CRB}_{\rm PC} $ both begin to plateau out w.r.t. $ \Delta\omega $.
This verifies the conclusions in Proposition~\ref{prop:prop_CRB_approx2} and Proposition~\ref{prop:prop_CRB_approx1}, which correspond to the plateau phase of \ac{crb}.

Note that there is a fluctuation of \ac{crb} near the turning point, which is a common phenomenon in the \ac{crb} based resolution criterion.
An intuitive reason is that the \ac{crb} involving the inverse of matrix usually has a complex form w.r.t. $ \Delta\omega $, particularly when $ \Delta\omega $ is close to the resolution limit $ \Omega $.
Based on the theoretical analysis method proposed in this paper, we can also give a more convincing explanation on this phenomenon. 
Particularly, from the discussion in Subsection~\ref{subsec:the_proof1} and Subsection~\ref{subsec:the_proof2}, we transform the analysis of $ \partial {\rm CRB}(\Delta\omega)/\partial \Delta\omega $ into that of $ \partial Q(\Delta\omega)/\partial \Delta\omega $.
We take $ Q_0 $ as an example and statistically analyze the how $ Q_0(\Delta\omega) $ varies with $ \Delta\omega $.
Based on the A1-A3 assumptions, we calculate the expectations of $ Q_0 $ as follows.

\begin{align}
\label{equ:EQ0}
|\mathbb{E}_{Q_0}|&=\left|\frac{\sin(\Delta \omega D/2)}{\Delta \omega D/2}\right|.
\end{align}

\noindent This is a sinc function, where the most significant change of is reflected near $ \Delta\omega D/2=\pi $, corresponding to the resolution cell $ \Delta\omega=\Omega $.
Since our study indicates that $ \Delta\omega $ primarily influences the \ac{crb} in the form of $ Q(\Delta\omega) $, the significant variation of $ Q(\Delta\omega) $ near $ \Omega $ indirectly results in the rapid fluctuation of $ {\rm CRB}(\Delta\omega) $ around $ \Omega $.



\subsection{Verification of CRB turning point's low sensitivity to SNR}
\label{subsec:snr_sensitive}

We show the turning points of $ {\rm CRB}_{\rm FC} $ and $ {\rm CRB}_{\rm PC} $ in different \ac{snr}s to explain that the proposed resolution criterion is not sensitive to noise.

Consider the same simulation setting as Subsection \ref{subsec:example} except $ L=3 $.
We plot the $ {\rm CRB}_{\rm FC} $ and $ {\rm CRB}_{\rm PC} $ curves for \ac{snr} being $ 10 $, $ 20 $, and $ 30 $ dB, shown as Fig.~\ref{fig:different_snr}.
From Fig.~\ref{fig:different_snr}, we find that the curves of $ {\rm CRB}_{\rm PC} $ in different \ac{snr}s have the same shapes, as well as the turning points. 
How the noise affects the \ac{crb} reflects on the absolute values of \ac{crb}, instead of the relative relationship between \ac{crb} and $ \Delta\omega $.
This verifies that the proposed resolution criterion using the \ac{crb} turning point is not sensitive to \ac{snr} in Subsection~\ref{subsec:novel_usage_CRB}.


\begin{figure}[!htbp]
\centering
\includegraphics[width=3.2in]{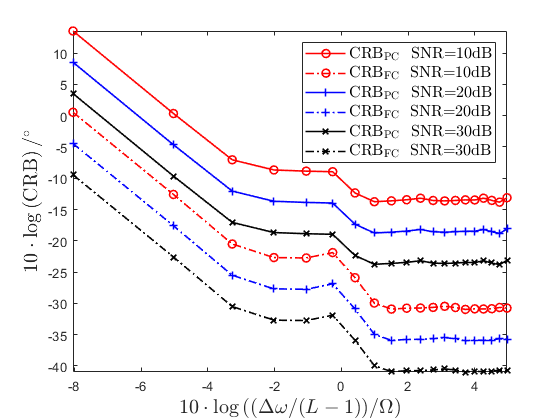}
\caption{$ {\rm CRB}_{\rm FC} $ and $ {\rm CRB}_{\rm PC} $ w.r.t. $ (\Delta\omega/(L-1))/\Omega $ for different \ac{snr}s when $ L=3 $.}
\label{fig:different_snr}
\end{figure}

\subsection{Verification of the approximation in Lemma \ref{lemma:PC}}
\label{subsec:approximation_all}

Since $\bm{M}\bm{G}^{-1}\bm{M}^T$ is the main component of $ {\rm CRB}_{\rm PC} $ different from $ {\rm CRB}_{\rm FC} $, we show its approximation errors in the proof of Lemma \ref{lemma:PC}. 
Particularly, we compare the approximate $ \bm{M}\bm{G}^{-1}\bm{M}^T $ with the true counterpart.

Consider the same simulation setting as Subsection \ref{subsec:example} except $ L=2 $.
We take the (1,1) entries of the true and approximate $ \bm{M}\bm{G}^{-1}\bm{M}^T $ as an example, and the comparison is shown in Fig.~\ref{fig:approx_all}.
From Fig.~\ref{fig:approx_all}, we find that the approximation errors are large when $ \Delta\omega<\Omega $, and become small when $ \Delta\omega\geqslant\Omega $. 
This is because in the proof of Lemma \ref{lemma:PC}, we use $ |Q(\Delta\omega)|\approx 0 $ in Lemma \ref{lemma:Q} to simplify the proof.
This conclusion is feasible for $ \Delta\omega\geqslant\Omega $, but not for $ \Delta\omega<\Omega $.
When $ \Delta\omega\geqslant\Omega $, the approximation results are close to the true values, yielding the feasibility of the approximation.


\begin{figure}[!htbp]
\centering
\includegraphics[width=3.2in]{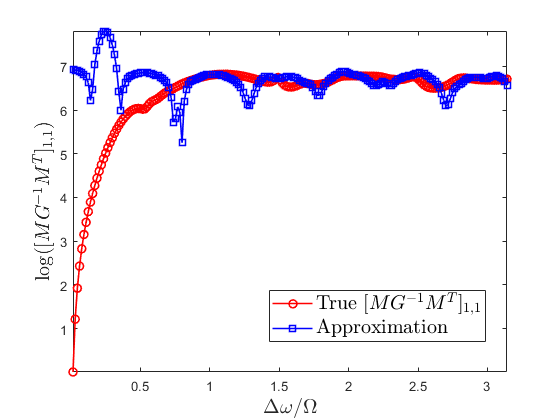}
\caption{The approximation performance of $ [\bm{M}\bm{G}^{-1}\bm{M}^T]_{1,1} $.}
\label{fig:approx_all}
\end{figure}

\subsection{Similar angular resolution between fully and partly calibrated arrays}
\label{subsec:algorithm}

We show that the angular resolution of fully and partly calibrated arrays is similar, which is achieved by comparing the upper resolution limit of the corresponding direction-finding algorithms.
Particularly, we provide the estimation accuracy of these algorithms varying with the source separation, and regard the turning point where the estimation accuracy initially tends to plateau out as the resolution limit.
For fair comparison, the directions are estimated using \ac{music} \cite{schmidt1986multiple} for fully calibrated arrays, and root-RARE \cite{RARE02}, spectral-RARE \cite{exRARE04}, and ESPRIT-GP \cite{6621815} for partly calibrated arrays, since these algorithms are all based on subspace separation techniques with the difference being whether errors exist. 

Consider the same simulation settings as Subsection~\ref{subsec:example} except $ L=2 $, the complex coefficients $ \bm{s} $ being standard Gaussian variables and the number of snapshots being $ T=50 $.
We use the \ac{rmse} of $ \Delta\omega $ to indicate the estimation performance of directions. 
We carry out $ T_m=300 $ Monte Carlo trials and denote the \ac{rmse} of $ \bm{\omega} $ by
\begin{equation}
\label{equ:rmse}
{\rm RMSE}(\bm{\omega})=\sqrt{\frac{1}{T_m}\sum_{t=1}^{T_m} \left\Vert \hat{\bm{\omega}}_t-\bm{\omega}^* \right\Vert_2^2},
\end{equation}
where $ \hat{\bm{\omega}}_t $ is the estimate in the $ t $-th trial and $ \bm{\omega}^* $ is the true value.
The estimation results of \ac{music} for fully calibrated arrays and root-RARE, spectral-RARE, and ESPRIT-GP for partly calibrated arrays w.r.t. $ \Delta\omega $ are shown in Fig.~\ref{fig:MUSIC}.

\begin{figure}[!htbp]
\centering
\subfigure[Partly calibrated case.]{
\label{subfigure:musica}
\includegraphics[width=1.6in]{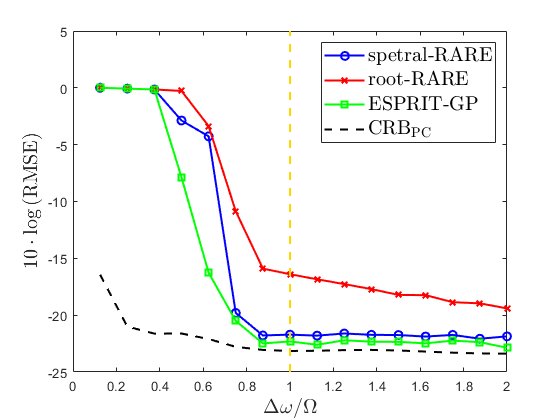}}
\subfigure[Fully calibrated case.]{
\label{subfigure:musicb}
\includegraphics[width=1.6in]{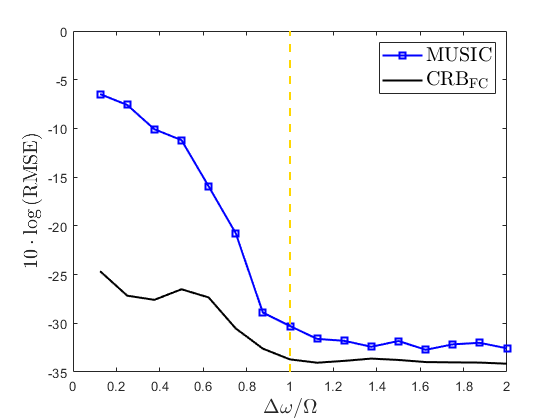}}
\caption{The \ac{rmse} of $ \bm{\omega} $ w.r.t. $ \Delta\omega $ in fully and partly calibrated arrays.}
\label{fig:MUSIC}
\end{figure}

From Fig.~\ref{subfigure:musica}, we find that when $ \Delta\omega\leqslant\Omega $, the \ac{rmse}s of root-RARE, spectral-RARE, and ESPRIT-GP decline as $ \Delta\omega $ increases; when $ \Delta\omega>\Omega $, the \ac{rmse}s tend to be stable.
Similar phenomenon is found in the fully calibrated case in Fig.~\ref{subfigure:musicb}, yielding that the resolution limit of these algorithms is close to $ \Omega $.
This also heuristically indicates that the resolution limit of fully and partly calibrated arrays is similar, both close to $ \Omega $, corresponding to our main conclusion.
We note that the turning points of the \ac{rmse}s w.r.t. $ \Delta\omega $ in Fig.~\ref{fig:MUSIC}  are not exactly located at $ \Omega $ since $ \Omega $ is an empirical bound and the subspace based algorithms have super-resolution ability.

Then, we construct the dependence of the resolution probability for the RARE/ESPRIT methods in partly calibrated arrays and MUSIC method in fully calibrated arrays.
Particularly, we define the resolution probability $ P_t $ as the probability that the estimation error is less than a threshold, given by
\begin{align}
\label{equ:prothre}
P_t=\frac{T_p}{T_m},
\end{align}
where $ T_p $ is the number of trials in which the estimation error is less than $ -13 $dB for partly calibrated case and $ -30 $dB for fully calibrated case, respectively, and the thresholds are empirically chosen based on the corresponding \ac{crb}s.
The resolution probability of RARE/ESPRIT and MUSIC is shown as Fig.~\ref{fig:probability}, which also verifies that the angular resolution between fully and partly calibrated arrays is similar.

\begin{figure}[!htbp]
\centering
\subfigure[Partly calibrated case.]{
\label{subfigure:musicaa}
\includegraphics[width=1.6in]{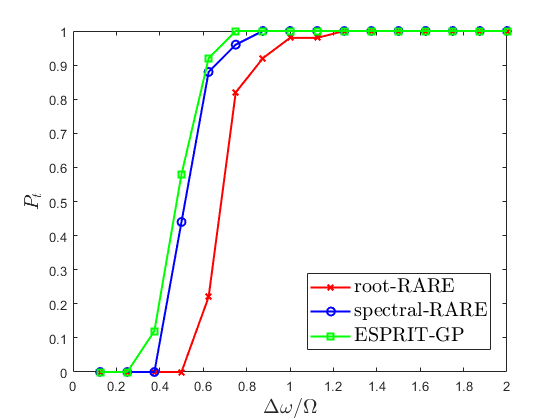}}
\subfigure[Fully calibrated case.]{
\label{subfigure:musicbb}
\includegraphics[width=1.6in]{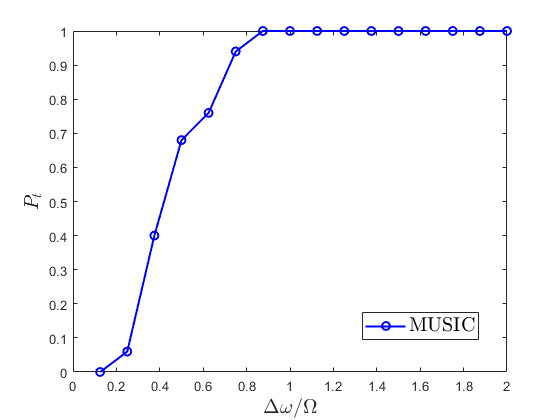}}
\caption{The resolution probability in fully and partly calibrated arrays.}
\label{fig:probability}
\end{figure}

\section{Conclusion}
\label{sec:conclusion}

In this paper, we theoretically explain that partly calibrated arrays achieve similar angular resolution as fully calibrated arrays, both inversely proportional to the whole array aperture.
The analysis is based on a characteristics of \ac{crb} that when the source separation $ \Delta\omega $ increases, the \ac{crb} w.r.t $ \Delta\omega $ first declines rapidly, then plateaus out, and the turning point is close to the angular resolution limit.
Hence, we transform the angular resolution analysis into comparing the turning points of the \ac{crb}s of fully and partly calibrated arrays, where the key technique lies in the partial derivative of \ac{crb} w.r.t. $ \Delta\omega $.
To this end, we introduce some intermediate variables to simplify the analysis and theoretically explain the declining and plateau phases of the \ac{crb}.
This work provides an important theoretical guarantee for the high-resolution performance of distributed arrays in mobile platforms.
We believe that the proposed method to analyze the impact of position errors on resolution in distributed arrays using \ac{crb} turning points can be applied to other types of errors, such as gain and phase errors. 
This is theoretically feasible, but its specific implementation requires further research.




\appendices

\section{Calculation of CRB}
\label{app:cal_CRB}

The \ac{crb}s of spatial frequencies $\bm{\omega}$ using fully and partly calibrated arrays, denoted by $ {\rm CRB}_{\rm FC}(\bm{\omega}) $ and $ {\rm CRB}_{\rm PC}(\bm{\omega}) $, respectively, are given by \cite{RARE02},
\begin{align}
\label{equ:CRB_FC}
{\rm CRB}_{\rm FC}(\bm{\omega})&=\frac{\sigma^2}{2}\bm{F}^{-1}, \\
\label{equ:CRB_PC}
{\rm CRB}_{\rm PC}(\bm{\omega})&=\frac{\sigma^2}{2}(\bm{F}-\bm{M}\bm{G}^{-1}\bm{M}^T)^{-1},
\end{align}
where
\begin{align}
\label{equ:F}
\bm{F}&=\sum_{i=1}^T{\rm Re}\left\{\bm{D}_i^H\bm{\Pi}_{\bm{A}}^{\perp}\bm{D}_i\right\},
\end{align}
\begin{align}
\label{equ:M}
\bm{M}&=\sum_{i=1}^T{\rm Re}\left\{\bm{D}_i^H\bm{\Pi}_{\bm{A}}^{\perp}\bm{H}_i\right\}, 
\end{align}
\begin{align}
\label{equ:G}
\bm{G}&=\sum_{i=1}^T{\rm Re}\left\{\bm{H}_i^H\bm{\Pi}_{\bm{A}}^{\perp}\bm{H}_i\right\},
\end{align}
\begin{align}
\label{equ:PiA}
\bm{\Pi}_{\bm{A}}^{\perp}&=\bm{I}-\bm{A}(\bm{A}^H\bm{A})^{-1}\bm{A}^H, 
\end{align}
\begin{align}
\label{equ:D}
\bm{D}_i&=\left[\frac{\partial \bm{a}(\omega_1)}{\partial \omega_1}s_1(t_i),\dots,\frac{\partial \bm{a}(\omega_L)}{\partial \omega_L}s_L(t_i)\right], 
\end{align}
\begin{align}
\frac{\partial \bm{a}(\omega)}{\partial \omega}&=j\cdot\left[\varphi_1 e^{j\omega\varphi_1},\dots,\varphi_N e^{j\omega\varphi_N}\right]^T, 
\end{align}
\begin{align}
\label{equ:H}
\bm{H}_i&=\begin{bmatrix}
\bm{0} & & \bm{0} \\
\widetilde{\bm{B}}_{\bm{\xi},i}^2 & & \bm{0} \\
& \ddots & \\
\bm{0} & & \widetilde{\bm{B}}_{\bm{\xi},i}^K
\end{bmatrix},
\end{align}
\begin{align}
\widetilde{\bm{B}}_{\bm{\xi},i}^k&=j\cdot\sum_{l=1}^L\ s_l(t_i)\cdot\omega_l\cdot\bm{a}_k(\omega_l).
\label{equ:Bxi}
\end{align}

We consider the single-snapshot case and omit $ i $ since the multiple-snapshot cases ($ T>1 $) is a direct extension to those in the single-snapshot cases ($ T=1 $). 

\section{Proof of Proposition \ref{prop:prop2}}
\label{app:proof_of_prop2}

The proof of Proposition \ref{prop:prop2} is an extension of Lemma \ref{prop:prop1} \cite{Lee92} from fully calibrated arrays to partly calibrated arrays.
The main difference lies in the small quantity approximation of the matrix $ \bm{M}\bm{G}^{-1}\bm{M}^T $ in \eqref{equ:CRB_PC}. 

Particularly, denote $ \Delta\omega_l=\omega_l-\omega_1 $ for $ l=1,\dots,L $ and $ \Delta\omega\equiv\Delta\omega_L=\omega_L-\omega_1 $.
When $ \Delta\omega\ll\Omega $, we carry out Taylor expansion of $ \bm{a}_k(\omega) $ in \eqref{equ:Bxi} on $ \omega=\omega_1 $ as
\begin{align}
\label{equ:Taylor}
\bm{a}_k(\omega_l)&=\bm{a}_k(\omega_1)+\bm{a}_k^{(1)}(\omega_1)\Delta\omega_l+\bm{a}_k^{(2)}(\omega_1)\frac{\Delta\omega_l^2}{2!}+\cdots, \nonumber \\
&=\bm{a}_k(\omega_1)+\tilde{\bm{a}}_k^{(1)}(\omega_1)\Delta\omega+\tilde{\bm{a}}_k^{(2)}(\omega_1)\frac{\Delta\omega^2}{2!}+\cdots,
\end{align}
where $ \tilde{\bm{a}}_k^{(p)}(\omega_0)\equiv\bm{a}_k^{(p)}(\omega_0)\cdot (\Delta\omega_l/\Delta\omega)^p $ and $ (\cdot)^{(p)} $ denotes the $ p $-order derivative of $ \cdot $ for $ p=1,2,\dots $.
By substituting \eqref{equ:Taylor} to \eqref{equ:Bxi}, we have 
\begin{align}
\label{equ:Bxiomega}
\widetilde{\bm{B}}_{\bm{\xi}}^k=\bm{c}_{\xi}^k+O(\Delta\omega),
\end{align}
where $ \bm{c}_{\xi}^k\equiv j\cdot\sum_{l=1}^L\ s_l\cdot\omega_1\cdot\bm{a}_k(\omega_1) $. 

From the proof of Lemma~\ref{prop:prop1} \cite{Lee92}, 
$ \bm{D} $ in \eqref{equ:D} and $ \bm{\Pi}_{\bm{A}}^{\perp} $ in \eqref{equ:PiA} w.r.t. $ \Delta\omega $ are expressed as follows:
\begin{align}
\label{equ:insideprop1_D}
\bm{D}&=(\Delta\omega)^{L-1}\bm{C}_{D}+O((\Delta\omega)^L), \\
\label{equ:insideprop1_Pi}
\bm{\Pi}_{\bm{A}}^{\perp}&=\bm{C}_A+O(\Delta\omega),
\end{align}
where $ \bm{C}_{D} $ and $ \bm{C}_A $ are constants w.r.t. $ \Delta\omega $. 
Since $ \bm{\Pi}_{\bm{A}}^{\perp} $ in \eqref{equ:PiA} is a symmetric projection matrix satisfying $ \bm{\Pi}_{\bm{A}}^{\perp}=(\bm{\Pi}_{\bm{A}}^{\perp})^H=\bm{\Pi}_{\bm{A}}^{\perp}\bm{\Pi}_{\bm{A}}^{\perp} $, we rewrite $ \bm{M} $ in \eqref{equ:M} and $ \bm{G} $ in \eqref{equ:G} respectively as
\begin{align}
\label{equ:M_omega}
\bm{M}&=\bm{D}^H\bm{\Pi}_{\bm{A}}^{\perp}\bm{H}=(\bm{\Pi}_{\bm{A}}^{\perp}\bm{D})^H\bm{\Pi}_{\bm{A}}^{\perp}\bm{H}, \\
\label{equ:G_omega}
\bm{G}&=\bm{H}^H\bm{\Pi}_{\bm{A}}^{\perp}\bm{H}=(\bm{\Pi}_{\bm{A}}^{\perp}\bm{H})^H\bm{\Pi}_{\bm{A}}^{\perp}\bm{H}.
\end{align}
By substituting \eqref{equ:H}, \eqref{equ:insideprop1_Pi} and \eqref{equ:Bxiomega} to $ \bm{\Pi}_{\bm{A}}^{\perp}\bm{H} $, we have
\begin{align}
\label{equ:PiAH}
\bm{\Pi}_{\bm{A}}^{\perp}\bm{H}=\bm{C}_H+O(\Delta\omega),
\end{align}
and $ \bm{C}_H=[\bm{C}_A^2\bm{c}_{\xi}^2,\dots,\bm{C}_A^K\bm{c}_{\xi}^K] $ is a constant w.r.t. $ \Delta\omega $, where $ \bm{C}_A=[\bm{C}_A^1,\dots,\bm{C}_A^K] $ and $ \bm{C}_A^k $ denotes the $ n\in\mathcal{N}_k $ columns of $ \bm{C}_A $ for $ k=1,\dots,K $.

If $ \bm{C}_H\ne\bm{0} $ (to be proved later), we substitute \eqref{equ:insideprop1_D}, \eqref{equ:insideprop1_Pi} and \eqref{equ:PiAH} to \eqref{equ:M_omega} and \eqref{equ:G_omega}, and have 
\begin{align}
\label{equ:Msmall}
\bm{M}&=(\Delta\omega)^{L-1}{\rm Re}\{\bm{C}^H_{D}\bm{C}_A^H\bm{C}_H\}+O((\Delta\omega)^L), \\
\label{equ:Gsmall}
\bm{G}&={\rm Re}\left\{(\bm{C}_H+O(\Delta\omega))^H(\bm{C}_H+O(\Delta\omega))\right\},
\end{align}
yielding that
\begin{align}
\label{equ:MG-1MT}
\bm{M}\bm{G}^{-1}\bm{M}^T=(\Delta\omega)^{2(L-1)}\bm{C}_U+O((\Delta\omega)^{2(L-1)+1}),
\end{align}
where $ \bm{C}_U $ is a constant w.r.t. $ \Delta\omega $.
Based on \eqref{equ:MG-1MT} and Lemma \ref{prop:prop1}, we have \eqref{equ:insideprop2} with
\begin{align}
\label{equ:CG}
\bm{C}_G=\frac{\sigma^2}{2}\left(\bm{C}_U-{\rm Re}\{\bm{C}_{D}^H\bm{C}_A\bm{C}_{D}\}\right)^{-1},
\end{align}
completing the proof.

Here we prove that any column of $ \bm{C}_H $ is not $ \bm{0} $, hence $ \bm{C}_H\ne\bm{0} $:
Consider if there is one column of $ \bm{C}_H $ equal to $ \bm{0} $.
We assume that the 1-st column of $ \bm{C}_H $ is $ \bm{0} $ without loss of generality, given by $ \bm{C}_A^2\bm{c}_{\xi}^2=\bm{0} $. In this case, we construct the following vector,
\begin{align}
\label{equ:2Lvectors}
\bar{\bm{a}}=\left[\bm{0}^T,(\bm{c}_{\xi}^2)^T,\bm{0}^T\right]^T\in\mathbb{C}^{N\times1},
\end{align}
such that $ \bm{C}_A\bar{\bm{a}}=\bm{0} $, where the the first $ \bm{0}\in\mathbb{R}^{|\mathcal{N}_1|\times1} $, the second $ \bm{0}\in\mathbb{R}^{(N-|\mathcal{N}_1|-|\mathcal{N}_2|)\times1} $, and $ \bm{C}_{A} $ is expressed as \cite{RARE02}
\begin{align}
\label{equ:CA_exp}
\bm{C}_A=\bm{I}-\dot{\bm{A}}(\dot{\bm{A}}^H\dot{\bm{A}})^{-1}\dot{\bm{A}}^H,
\end{align}
with 
\begin{align}
\dot{\bm{A}}&=\left[\bm{a}(\omega_1),\bm{a}^{(1)}(\omega_1),\dots,\bm{a}^{(L-1)}(\omega_1)\right], \\
\bm{a}^{(l)}(\omega_1)&=\left[\frac{\partial^l}{\partial\omega^l}\bm{a}(\omega)\right]_{\omega=\omega_1}.
\end{align}
Since $ \bm{C}_A\bar{\bm{a}}=\bm{0} $ and $ \bm{C}_A $ in \eqref{equ:CA_exp} is a projection matrix, $ \bar{\bm{a}} $ should be in the column space of $ \dot{\bm{A}} $. 
However, we then explain that $ \bar{\bm{a}} $ could not be in the column space of $ \dot{\bm{A}} $, i.e., there is no $ \bm{y}\in\mathbb{C}^{L\times1} $ such that $ \bar{\bm{a}}=\dot{\bm{A}}\bm{y} $, yielding a contradiction.
Particularly, define $ \dot{\bm{A}}=[\dot{\bm{A}}_1^T,\dots,\dot{\bm{A}}_K^T]^T $ and $ \dot{\bm{A}}_k $ denotes the $ n\in\mathcal{N}_k $ rows of $ \dot{\bm{A}} $ for $ k=1,\dots,K $.
We find that $ \dot{\bm{A}}_1\in\mathbb{C}^{|\mathcal{N}_1|\times L}, |\mathcal{N}_1|>L $ has full column rank, given by
\begin{align}
\dot{\bm{A}}_1=\bm{D}_A\begin{bmatrix}
1 & j\varphi_1 & \dots & (j\varphi_1)^{L-1} \\
& & \vdots & \\
1 & j\varphi_{|\mathcal{N}_1|} & \dots & (j\varphi_{|\mathcal{N}_1|})^{L-1} 
\end{bmatrix},
\end{align}
where $ \bm{D}_A={\rm diag}([e^{j\omega_0\varphi_1},\dots,e^{j\omega_0\varphi_{|\mathcal{N}_1|}}]^T) $.
Therefore, there is no nonzero $ \bm{y}\in\mathbb{C}^{L\times1} $ such that $ [\dot{\bm{A}}]_1\bm{y}=\bm{0} $, yielding $ \bar{\bm{a}}\ne\dot{\bm{A}}\bm{y} $ for any nonzero $ \bm{y}\in\mathbb{C}^{L\times1} $.
This contradiction implies any column of $ \bm{C}_H $ is not $ \bm{0} $ and hence $ \bm{C}_H\ne\bm{0} $.

\section{Proof of Lemma~\ref{lemma:Q}}
\label{app:prop_4_1}

Lemma~\ref{lemma:Q} in Subsection \ref{subsec:the_proof1} is a rough representation, and we introduce its rigorous expression here, given by Lemma~\ref{lemma:Q}.1 and Lemma~\ref{lemma:Q}.2.
For $ Q_i, i\in\mathbb{N} $, we take $ Q_0 $ as an example and the other cases are its direct extension with minor modifications.

For $ |Q_0(\Delta\omega)|\approx 0 $ in Lemma~\ref{lemma:Q}, the rigorous expression is as follows:

\newtheorem{lem}{Lemma 2.}

\begin{lem}
\label{lemma:lemmaEQ}
Under assumptions A1-A3, for any $ t>0 $, define $ t_1,t_2 $ as
\begin{align}
\label{equ:t1}
t_1&\equiv \max\left\{\left|t+\frac{\sin\Delta\omega D}{\Delta\omega D}\right|,\left|-t+\frac{\sin\Delta\omega D}{\Delta\omega D}\right|\right\}, \\
\label{equ:t2}
t_2&\equiv\left|t+\frac{1-\cos\Delta\omega D}{\Delta\omega D}\right|.
\end{align}
When $ \Delta\omega\geqslant \Omega $, we have
\begin{align}
P\left(|Q_0|\geqslant \sqrt{t_1^2+t_2^2}\right)\leqslant 4e^{-\frac{Nt^2}{2}}.
\end{align}
\end{lem}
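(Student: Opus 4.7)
The plan is to recognize $Q_0 = \frac{1}{N}\sum_{n=1}^N e^{j\Delta\omega\varphi_n}$ as the empirical average of bounded complex-valued random variables and apply a standard concentration inequality, after first computing its mean explicitly. Under Assumption~A3, each position $\varphi_n$ with $n\in\mathcal{N}_k$ is effectively replaced by the subarray center $\bar{\varphi}_k$, and under Assumption~A2 the $\bar{\varphi}_k$ are iid uniform on $[0,D]$; consequently $Q_0$ reduces to a normalized sum of unit-modulus random variables whose real and imaginary parts each lie in $[-1,1]$.

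Next I would perform an elementary integration,
\begin{align*}
\mathbb{E}\bigl[e^{j\Delta\omega\bar{\varphi}_k}\bigr] = \frac{1}{D}\int_0^D e^{j\Delta\omega x}\,dx = \frac{\sin(\Delta\omega D)}{\Delta\omega D} + j\,\frac{1-\cos(\Delta\omega D)}{\Delta\omega D},
\end{align*}
so that ${\rm Re}(\mathbb{E}Q_0)$ and ${\rm Im}(\mathbb{E}Q_0)$ exactly match the constants appearing inside the definitions of $t_1$ and $t_2$. I would then apply Hoeffding's inequality to ${\rm Re}(Q_0)$---an average of iid $[-1,1]$-valued variables $\cos(\Delta\omega\varphi_n)$---and separately to ${\rm Im}(Q_0)$, obtaining deviations of at most $t$ from their respective means with probability at least $1-2e^{-Nt^2/2}$ each. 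A union bound then produces a good event of probability at least $1-4e^{-Nt^2/2}$ on which simultaneously $|{\rm Re}(Q_0)|\le t_1$ and $|{\rm Im}(Q_0)|\le t_2$; here the $\max$ in the definition of $t_1$ is what absorbs the possible negativity of $\sin(\Delta\omega D)/(\Delta\omega D)$, while the nonnegativity of $(1-\cos(\Delta\omega D))/(\Delta\omega D)$ makes the corresponding $\max$ in $t_2$ collapse to a single term. The Pythagorean identity $|Q_0|^2 = {\rm Re}(Q_0)^2 + {\rm Im}(Q_0)^2$ then yields $|Q_0|\le\sqrt{t_1^2+t_2^2}$ on this event, which is precisely the complement of the stated tail event.

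The main delicacy lies in bookkeeping within the Hoeffding exponent. Applied literally, Assumption~A3 collapses each subarray to a single point, leaving only $K$ truly independent random variables driving $Q_0$ and yielding $4e^{-Kt^2/2}$ rather than the claimed $4e^{-Nt^2/2}$; recovering the $N$ in the exponent requires keeping the $N/K$ elements of each subarray as separate (identical-in-distribution) bounded summands when invoking Hoeffding, or equivalently absorbing the multiplicity into the range scaling of each summand. A smaller but notable point is that the hypothesis $\Delta\omega\ge\Omega$ plays no role in the probabilistic bound itself---the argument above is valid for every $\Delta\omega$---and enters the statement only to guarantee that $\sin(\Delta\omega D)/(\Delta\omega D)$ and $(1-\cos(\Delta\omega D))/(\Delta\omega D)$ are small, so that $t_1$ and $t_2$ remain on the scale of $t$ and the lemma delivers a meaningful upper bound on $|Q_0|$.
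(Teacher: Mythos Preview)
Your proposal is correct and follows essentially the same route as the paper: split $Q_0$ into its real and imaginary parts, compute their means by integrating $e^{j\Delta\omega x}$ against the uniform density on $[0,D]$, apply Hoeffding's inequality to each part, and combine via a union bound and $|Q_0|^2={\rm Re}(Q_0)^2+{\rm Im}(Q_0)^2$. Your remarks on the $K$-versus-$N$ bookkeeping in the Hoeffding exponent and on the purely cosmetic role of the hypothesis $\Delta\omega\ge\Omega$ are valid observations that the paper's proof does not make explicit.
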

\begin{proof}


First, we rewrite $ Q_0 $ in \eqref{equ:threeQ} as
\begin{align}
\label{equ:Q_0_2}
Q_0=\frac{\sum_{n=1}^N\cos\Delta\omega\varphi_n+j\sum_{n=1}^N\sin\Delta\omega\varphi_n}{N}.
\end{align}
When $ \varphi_n\in\mathcal{U}[0,D] $ in the A2 assumption, we have
\begin{align}
\label{equ:cos_average} 
\mathbb{E}\left(\frac{1}{N}\sum_{n=1}^N\cos\Delta\omega\varphi_n\right)&=\frac{\sin\Delta\omega D}{\Delta\omega D}, \\
\label{equ:sin_average}
\mathbb{E}\left(\frac{1}{N}\sum_{n=1}^N\sin\Delta\omega\varphi_n\right)&=\frac{1-\cos\Delta\omega D}{\Delta\omega D}.
\end{align}
Through the Hoeffding inequality \cite{boucheron2003concentration} and some straightforward derivation, we have 
\begin{align}
&P_1\equiv P\left(\left|\frac{1}{N}\sum_{n=1}^N\cos\Delta\omega\varphi_n\right|\geqslant t_1 \right)\leqslant 2e^{-\frac{Nt^2}{2}}, \\
&P_2\equiv P\left(\left|\frac{1}{N}\sum_{n=1}^N\sin\Delta\omega\varphi_n\right|\geqslant t_2 \right)\leqslant 2e^{-\frac{Nt^2}{2}},
\end{align}
where $ t_1,t_2 $ are defined in \eqref{equ:t1} and \eqref{equ:t2}, respectively.
For $ Q_0 $ in \eqref{equ:Q_0_2}, we have
\begin{align}
\label{equ:hoff_end}
P\left(|Q_0|\geqslant \sqrt{t_1^2+t_2^2}\right) \leqslant P_1+P_2 \leqslant 4e^{-\frac{Nt^2}{2}},
\end{align}
completing the proof.
\end{proof}

For $ \left|\partial Q_0(\Delta\omega)/\partial \Delta\omega\right|\approx 0 $ in Lemma~\ref{lemma:Q}, the rigorous expression is as follows:
\begin{lem}
\label{lemma:lemmaEQ2}
Under assumptions A1-A3, for any $ t>0 $, when $ \Delta\omega\geqslant\Omega $, we have
\begin{align}
P\left(\left|\frac{\partial Q_0(\Delta\omega)}{\partial \Delta\omega}\right|\geqslant \frac{\sum_{n=1}^N \varphi_n}{N}\sqrt{\bar{t}_1^2+\bar{t}_2^2}\right)\leqslant 4e^{-\frac{Nt^2}{2}},
\end{align}
where $ \bar{t}_1,\bar{t}_2 $ are calculated for $ Q_1 $ similar with $ t_1,t_2 $ for $ Q_0 $ in Lemma 2.~\ref{lemma:lemmaEQ}.
\end{lem}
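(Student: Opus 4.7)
The plan is to reduce Lemma~\ref{lemma:lemmaEQ2} to a concentration bound on $|Q_1(\Delta\omega)|$ that mirrors the bound on $|Q_0(\Delta\omega)|$ in Lemma 2.~\ref{lemma:lemmaEQ}, by observing that the derivative in question is essentially $Q_1$ up to a scalar prefactor.

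First, I would differentiate the defining expression of $Q_0$ in \eqref{equ:threeQ} with respect to $\Delta\omega$ to obtain
\begin{align*}
\frac{\partial Q_0(\Delta\omega)}{\partial \Delta\omega}
= \frac{j}{N}\sum_{n=1}^N \varphi_n\, e^{j\Delta\omega\varphi_n}
= j\,\frac{\sum_{n=1}^N \varphi_n}{N}\,Q_1(\Delta\omega),
\end{align*}
where the second equality uses the definition of $Q_1$ in \eqref{equ:threeQ}. Taking absolute values, the event appearing in the lemma is exactly $\{|Q_1(\Delta\omega)|\geq \sqrt{\bar{t}_1^2+\bar{t}_2^2}\}$, so it suffices to prove $P\bigl(|Q_1(\Delta\omega)|\geq \sqrt{\bar{t}_1^2+\bar{t}_2^2}\bigr)\leq 4e^{-Nt^2/2}$.

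Next, I would mirror the argument of Lemma 2.~\ref{lemma:lemmaEQ} with cosine and sine replaced by their $\varphi$-weighted counterparts. Decompose
\begin{align*}
Q_1 = \frac{\tfrac{1}{N}\sum_n \varphi_n\cos(\Delta\omega\varphi_n)}{\tfrac{1}{N}\sum_n \varphi_n}
 + j\,\frac{\tfrac{1}{N}\sum_n \varphi_n\sin(\Delta\omega\varphi_n)}{\tfrac{1}{N}\sum_n \varphi_n}.
\end{align*}
Under A1-A3 the positions $\varphi_n$ may be treated as i.i.d.\ samples from $\mathcal{U}[0,D]$, so I would compute $\mathbb{E}[\varphi\cos(\Delta\omega\varphi)]$ and $\mathbb{E}[\varphi\sin(\Delta\omega\varphi)]$ in closed form by integration by parts on $[0,D]$, yielding the analogues of the means $\sin(\Delta\omega D)/(\Delta\omega D)$ and $(1-\cos(\Delta\omega D))/(\Delta\omega D)$ in \eqref{equ:cos_average}-\eqref{equ:sin_average}. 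These determine $\bar{t}_1,\bar{t}_2$ in direct analogy with \eqref{equ:t1}-\eqref{equ:t2}. Applying the Hoeffding inequality separately to the real and imaginary numerator sums, whose summands $\varphi_n\cos(\Delta\omega\varphi_n)$ and $\varphi_n\sin(\Delta\omega\varphi_n)$ are bounded in $[-D,D]$, and combining via a union bound together with $|Q_1|\leq \sqrt{(\mathrm{Re}\,Q_1)^2+(\mathrm{Im}\,Q_1)^2}$, then produces the $4e^{-Nt^2/2}$ tail bound exactly as in \eqref{equ:hoff_end}.

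The main obstacle is that $Q_1$ is a \emph{ratio} of two random sums, rather than a plain empirical mean as $Q_0$ was: the denominator $\tfrac{1}{N}\sum_n\varphi_n$ is itself random and correlated with the numerators. I would circumvent this by exploiting the precise form of the statement: since $\sum_n\varphi_n/N$ already appears on the right-hand side of the threshold, it is equivalent to bound the \emph{unnormalized} quantity $\tfrac{1}{N}|\sum_n \varphi_n e^{j\Delta\omega\varphi_n}|$ against $(\sum_n\varphi_n/N)\sqrt{\bar{t}_1^2+\bar{t}_2^2}$, which moves the randomness of $\sum_n\varphi_n$ out of the ratio and leaves a clean application of Hoeffding to the cosine and sine partial sums. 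A secondary subtlety is the $[-D,D]$ range of $\varphi_n\cos(\Delta\omega\varphi_n)$, which inflates the Hoeffding variance proxy by $D^2$ relative to Lemma 2.~\ref{lemma:lemmaEQ}; this factor is absorbed into the definitions of $\bar{t}_1,\bar{t}_2$ so that the exponent remains $Nt^2/2$, matching the rate of the companion lemma.
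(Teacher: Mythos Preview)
Your proposal is correct and follows essentially the same approach as the paper: the paper's proof simply computes $\bigl|\partial Q_0/\partial\Delta\omega\bigr|=\bigl(\sum_n\varphi_n/N\bigr)\,|Q_1|$ via \eqref{equ:Q_partial} and then defers to the $Q_1$-analogue of Lemma~2.\ref{lemma:lemmaEQ}, exactly as you do. Your discussion of the Hoeffding argument for $Q_1$ and the ratio subtlety fills in details the paper leaves implicit in the phrase ``$\bar t_1,\bar t_2$ are calculated for $Q_1$ similar with $t_1,t_2$ for $Q_0$,'' but the reduction is identical.
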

\begin{proof}
\label{proof:lemma22}
The partial derivative of $ Q_0(\Delta\omega) $ w.r.t. $ \Delta\omega $ is 
\begin{align}
\label{equ:Q_partial}
\left|\frac{\partial Q_0(\Delta\omega)}{\partial \Delta\omega}\right|&=\left| \frac{\sum_{n=1}^N \varphi_n e^{j\Delta\omega\varphi_n}}{N} \right| \nonumber \\
&=\frac{\sum_{n=1}^N \varphi_n}{N}\cdot \left|Q_{1}\right|,
\end{align}
completing the proof.
\end{proof}

\section{Proof of Lemma \ref{lemma:PC}}
\label{app:proof_of_prop4}

$ {\rm CRB}_{\rm PC} $ in \eqref{equ:CRB_PC} can be divided into $ \bm{F} $ and $ \bm{M}\bm{G}^{-1}\bm{M}^T $.
The former is transformed into a function of $ 
Q_i $ in Lemma~\ref{lemma:FC}, and we consider the later one here.

To simply the derivation, we use the approximation $ |Q_0|\approx 0 $ in Lemma~\ref{lemma:Q}, and then substituting \eqref{equ:piAQ} to \eqref{equ:M} and \eqref{equ:G}, 
\begin{align}
\label{equ:M_approx}
\bm{M}&\approx\widetilde{\bm{M}}\equiv{\rm Re}\left\{\bm{D}^H\bm{H}-\frac{\bm{D}^H\bm{A}\bm{A}^H\bm{H}}{N}\right\}, \\
\label{equ:G_approx}
\bm{G}&\approx\widetilde{\bm{G}}\equiv{\rm Re}\left\{\bm{H}^H\bm{H}-\frac{\bm{H}^H\bm{A}\bm{A}^H\bm{H}}{N}\right\},
\end{align}
where $ \bm{H}^H\bm{H} $ is a diagonal matrix denoted by $ \bm{H}^H\bm{H}={\rm diag}(\bm{h}) $. 

Based on $ |Q_1|\approx 0 $ in Lemma \ref{lemma:Q} and $  Q_1^k\approx Q_0^k $ in the A3 assumption (let $ f(x)=g(x)=x $), by substituting \eqref{equ:D} and \eqref{equ:H} to \eqref{equ:M_approx}, we approximate $ \bm{M} $ as
\begin{align}
\label{equ:Mrecast}
[\bm{M}]_{k-1}&\approx[\widehat{\bm{M}}]_{k-1} \nonumber \\
&\equiv\gamma_M^k\cdot \omega_0\cdot
\begin{bmatrix}
|s_1|^2+|s_1s_2|s_0^k\mu \\
|s_1s_2|s_0^k +|s_2|^2\mu
\end{bmatrix},
\end{align}
where $ \gamma_M^k=|\mathcal{N}_k|\cdot(\bar{\varphi}_k-\bar{\varphi}) $, $ s_0^k=\frac{{\rm Re}\{s_1^Hs_2 Q_0^k\}}{|s_1s_2|} $, $ \mu=1+\frac{\Delta\omega}{\omega_0} $ and $ \bar{\varphi}=\frac{\sum_{n=1}^N\varphi_n}{N} $ for $ k=2,\dots,K $.

Then, we approximate $ \bm{G}^{-1} $ by $ \widehat{\bm{G}}^{-1} $, which is shown in the following lemma:
\begin{lemma}
\label{lemma:lemmaBB}
Based on Assumption A2, we have 
\begin{align}
\label{equ:G-1}
\bm{G}^{-1}&\approx\widehat{\bm{G}}^{-1}\equiv\widehat{\bm{G}}_1+\widehat{\bm{G}}_2 \nonumber \\
&=\begin{bmatrix}
\frac{1}{B_2} & & 0 \\
 & \ddots & \\
0 & & \frac{1}{B_k}
\end{bmatrix}+ {\rm Re}\left\{
\begin{bmatrix}
\frac{\bm{b}_2^H}{B_2} \\
\vdots \\
\frac{\bm{b}_K^H}{B_K}
\end{bmatrix}
\begin{bmatrix}
\frac{\bm{b}_2}{B_2} & \cdots & \frac{\bm{b}_K}{B_K}
\end{bmatrix}\right\},
\end{align}
where $ B_k=[\bm{h}]_{k-1}>0 $ and $ \bm{b}_k=\frac{1}{\sqrt{N}}[\bm{A}^H\bm{H}]_{k-1} $.
\end{lemma}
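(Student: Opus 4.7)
The plan is to view $\widetilde{\bm{G}}$ from \eqref{equ:G_approx} as a low-rank perturbation of a diagonal matrix and truncate the resulting Neumann expansion at first order. First I would read off the structure: since $\bm{H}$ in \eqref{equ:H} is block diagonal with exactly one nonzero block per column, $\bm{H}^H\bm{H}$ is already the diagonal matrix with entries $B_k=\|\widetilde{\bm{B}}_{\bm{\xi}}^k\|_2^2>0$, and the second term factorizes as $\bm{H}^H\bm{A}\bm{A}^H\bm{H}/N=\bm{B}^H\bm{B}$, where $\bm{B}=[\bm{b}_2,\dots,\bm{b}_K]\in\mathbb{C}^{L\times(K-1)}$ and each $\bm{b}_k=\bm{A}_k^H\widetilde{\bm{B}}_{\bm{\xi}}^k/\sqrt{N}$. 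Writing $\bm{D}=\mathrm{diag}(B_2,\dots,B_K)$, we thus obtain the compact form $\widetilde{\bm{G}}=\bm{D}-\mathrm{Re}\{\bm{B}^H\bm{B}\}$.

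Second, I would expand the inverse as a Neumann series
\begin{equation*}
\widetilde{\bm{G}}^{-1}=\bm{D}^{-1}\bigl(\bm{I}-\mathrm{Re}\{\bm{B}^H\bm{B}\}\bm{D}^{-1}\bigr)^{-1}=\sum_{m=0}^{\infty}\bm{D}^{-1}\bigl(\mathrm{Re}\{\bm{B}^H\bm{B}\}\bm{D}^{-1}\bigr)^{m}.
\end{equation*}
The $m=0$ term is exactly $\bm{D}^{-1}=\widehat{\bm{G}}_1$, while the $m=1$ term is $\bm{D}^{-1}\mathrm{Re}\{\bm{B}^H\bm{B}\}\bm{D}^{-1}$, whose $(k-1,k'-1)$ entry equals $\mathrm{Re}\{\bm{b}_k^H\bm{b}_{k'}\}/(B_kB_{k'})$, matching $[\widehat{\bm{G}}_2]_{k-1,k'-1}$ by definition. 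Hence the claimed identification $\bm{G}^{-1}\approx\widehat{\bm{G}}_1+\widehat{\bm{G}}_2$ is precisely the first two terms of this Neumann series, and the entire content of the lemma reduces to showing that the tail $m\geqslant 2$ is negligible under A2.

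The main obstacle is controlling the spectral norm $\|\mathrm{Re}\{\bm{B}^H\bm{B}\}\bm{D}^{-1}\|$: a naive entrywise bound combined with the rank-$\leqslant 2L$ structure of $\mathrm{Re}\{\bm{B}^H\bm{B}\}$ (so that $\|\bm{B}^H\bm{B}\|\leqslant\|\bm{B}\|_F^2=O(N/K)$) and $\|\bm{D}^{-1}\|=O(K/N)$ only yields an $O(1)$ bound, which is too weak for geometric convergence. To sharpen this, I would split each inner product $\bm{b}_k^H\bm{b}_{k'}$ according to source-index pairs $(l,l')$: diagonal pairs $l=l'$ reduce under A3 to deterministic expressions whose sum across $k$ can be absorbed into $\bm{D}$ (or shown to keep $\bm{I}-\mathrm{Re}\{\bm{B}^H\bm{B}\}\bm{D}^{-1}$ invertible and well conditioned), while off-diagonal pairs $l\neq l'$ are sums of $Q$-type exponentials $e^{j(\omega_l-\omega_{l'})\varphi_n}$ that vanish in probability by Lemma~\ref{lemma:Q} whenever $\Delta\omega\geqslant\Omega$. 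Pinning down a concrete contraction constant from these two contributions---and verifying that the quadratic-and-higher tail is therefore $o(\widehat{\bm{G}}_2)$---is the step I expect to absorb most of the technical effort.
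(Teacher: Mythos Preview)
Your Neumann-series framing is sound, and the identification of the $m=0,1$ terms with $\widehat{\bm{G}}_1,\widehat{\bm{G}}_2$ is correct. The gap is in the tail control. You rightly note that the generator $P\equiv\mathrm{Re}\{\bm{B}^H\bm{B}\}\bm{D}^{-1}$ has spectral norm only $O(1)$; this bound is in fact tight (its trace $\sum_k\|\bm{b}_k\|^2/B_k$ is $O(1)$ and involves no phase oscillation), so the series need not converge geometrically and the plan cannot be rescued by splitting into source-index pairs. The $l=l'$ contributions inside each $\bm{b}_k$ are pure magnitudes, not $Q$-type exponential sums, so Lemma~\ref{lemma:Q} has nothing to act on there; the cross terms $l\neq l'$ live at the subarray level (they involve $Q_0^k$, not the whole-array $Q_i$), where Lemma~\ref{lemma:Q} again does not apply. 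More basically, the lemma is stated \emph{under A2 alone}, so any argument that needs $\Delta\omega\geqslant\Omega$ is already importing hypotheses the statement does not grant.

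The paper sidesteps the convergence question entirely. Rather than bounding the Neumann tail, it verifies by direct multiplication that $\widetilde{\bm{G}}\,\widehat{\bm{G}}=\bm{I}+\bar{\bm{B}}$ with $\bar{\bm{B}}=-P^{2}$ (exactly your $m=2$ term), and then shows each entry $[\bar{\bm{B}}]_{k_1-1,k_2-1}$ is $O(1/K)$ by a pure size-counting argument from A2. Concretely, $|\mathcal{N}_k|=N/K$ gives $\|\bm{b}_k\|=O(\sqrt{N}/K)$ and $B_k=O(N/K)$, hence $\mathrm{Re}\{\bm{b}_{k_1}^H\bm{b}_k\}\mathrm{Re}\{\bm{b}_k^H\bm{b}_{k_2}\}/(B_kB_{k_2})=O(1/K^{2})$, and summing over $K-1$ values of $k$ yields $O(1/K)$. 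With $\bar{\bm{B}}\approx\bm{0}$ entrywise one concludes $\widetilde{\bm{G}}^{-1}=\widehat{\bm{G}}(\bm{I}+\bar{\bm{B}})^{-1}\approx\widehat{\bm{G}}$. No phase cancellation, no $\Delta\omega\geqslant\Omega$, no series convergence---just the large-$K$ scaling built into A2. Your framework would reach the same endpoint if, instead of chasing a contraction bound on $P$, you simply computed $\widetilde{\bm{G}}\widehat{\bm{G}}$ and estimated the residual this way.
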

\begin{proof}
See Appendix \ref{app:BB}.
\end{proof}

Now, we approximate $ \bm{M}\bm{G}^{-1}\bm{M}^T $ by 
\begin{align}
\label{equ:MGMT_approx}
\bm{M}\bm{G}^{-1}\bm{M}^T\approx\widehat{\bm{M}}\widehat{\bm{G}}^{-1}\widehat{\bm{M}}^T=\widehat{\bm{M}}(\widehat{\bm{G}}_1+\widehat{\bm{G}}_2)\widehat{\bm{M}}^T.
\end{align}
Define $ \mathcal{G}_1=\widehat{\bm{M}}\widehat{\bm{G}}_1\widehat{\bm{M}}^T $ and $ \mathcal{G}_2=\widehat{\bm{M}}\widehat{\bm{G}}_2\widehat{\bm{M}}^T $.
In the sequel, we discuss how $ \mathcal{G}_1 $ and $ \mathcal{G}_2 $ are recast as functions of $ Q_i $, respectively.
Based on the A1 assumption, $ L=2 $ and $ \mathcal{G}_1,\mathcal{G}_2\in\mathbb{R}^{2\times 2} $. 
We take the $ (1,1) $-th entry as an example, and the other entries can be directly derived.
This is because the rows of $ \bm{M} $ have similar form, i.e., 
\begin{align}
\label{equ:extension_M}
[\bm{M}]_{2,k-1}(s_1,s_2,\mu) = \mu\cdot[\bm{M}]_{1,k-1}(s_2^H,s_1^H,1/\mu),
\end{align}
for $ k=2,\dots,K $.

For $ [\mathcal{G}_1]_{1,1} $, substituting \eqref{equ:Mrecast} and \eqref{equ:G-1} to $ \mathcal{G}_1 $ yields
\begin{align}
\label{equ:G11_recast1}
[\mathcal{G}_1]_{1,1}=|s_1|^2\sum_{k=2}^K\frac{(\gamma_M^k)^2}{|\mathcal{N}_k|} \left(1-\frac{\tilde{s}(1-(s_0^k)^2)\mu^2}{\tilde{s}^{-1}+\tilde{s}\mu^2+2s_0^k\mu}\right),
\end{align}
where $ \tilde{s}=\left|s_2/s_1\right|>0 $. 
Due to $ |Q_0^k|<1 $ in \eqref{equ:Q<1}, we have $ |s_0^k|<1 $, yielding that
\begin{align}
\label{equ:s0s1s2}
\tilde{s}^{-1}+\tilde{s}\mu^2+2s_0^k\mu> 0.
\end{align}
Consider the Taylor expansion of $ \frac{1}{\tilde{s}^{-1}+\tilde{s}\mu^2+2s_0^k\mu} $ at $ 2s_0^k\mu=0 $, given by
\begin{align}
\label{equ:smallonS0}
\frac{1}{\tilde{s}^{-1}+\tilde{s}\mu^2+2s_0^k\mu}&= \sum_{p=0}^{\infty}\frac{(-2s_0^k\mu)^p}{(\tilde{s}^{-1}+\tilde{s}\mu^2)^{p+1}}.
\end{align}
By substituting \eqref{equ:smallonS0} to \eqref{equ:G11_recast1}, we have
\begin{align}
\label{equ:approx_G11_Taylor}
[\mathcal{G}_1]_{1,1}=|s_1|^2\sum_{k=2}^K\frac{(\gamma_M^k)^2}{|\mathcal{N}_k|}\left(1-\tilde{s}(1-(s_0^k)^2)\sum_{p=0}^{\infty}u_p(s_0^k)^p\right),
\end{align}
where $ u_p $ is denoted by
\begin{align}
\label{equ:up}
u_p(\mu)=\frac{(-2)^p\mu^{p+2}}{(\tilde{s}^{-1}+\tilde{s}\mu^2)^{p+1}}.
\end{align}

The effect of $ \Delta\omega $ on $ [\mathcal{G}_1]_{1,1} $ in \eqref{equ:approx_G11_Taylor} is embodied in both $ s_0^k $ and $ u_p $.
In the sequel, we explain that $ u_p $ is almost constant w.r.t. $ \Delta\omega $, which is detailed in the following lemma. 
\begin{lemma}
\label{lemma:lemma_gradient_up}
The gradient of $ u_p $ w.r.t. $ \mu $, $ u_p'=\frac{{\rm d}u_p}{{\rm d}\mu} $, satisfies
\begin{align}
\label{equ:pro1}
&1.\ |u_p'|\leqslant 1,\ p=0,1,2,3. \\
\label{equ:pro2}
&2.\ \underset{|\mu|\rightarrow\infty}{\rm Lim}\ |u_p'|=0,\ p=0,1,2,\dots.
\end{align}
\end{lemma}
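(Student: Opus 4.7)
First, the plan is to derive an explicit closed form for $u_p'(\mu)$ via the quotient rule, then handle the two claims separately. Differentiating \eqref{equ:up} and collecting powers of $\mu$ yields
\begin{align}
u_p'(\mu) = (-2)^p\,\frac{\mu^{p+1}\bigl[(p+2)\tilde{s}^{-1}-p\,\tilde{s}\mu^2\bigr]}{(\tilde{s}^{-1}+\tilde{s}\mu^2)^{p+2}}.
\end{align}
It is then natural to rescale with $\nu=\tilde{s}\mu$, which gives the $\tilde{s}$-free form $u_p'(\mu)=(-2)^p\,\nu^{p+1}[(p+2)-p\nu^2]/(1+\nu^2)^{p+2}$. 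This substitution is important because \eqref{equ:pro1} must hold uniformly in the admissible amplitude ratio $\tilde{s}>0$, so eliminating $\tilde{s}$ turns the problem into a one-parameter optimisation in $\nu$.

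The asymptotic claim \eqref{equ:pro2} then follows by degree counting. In the rescaled form the numerator is $O(|\nu|^{p+3})$ while the denominator is $(1+\nu^2)^{p+2}=\Theta(|\nu|^{2(p+2)})$, so $|u_p'(\mu)|=O(|\nu|^{-(p+1)})\to 0$ as $|\mu|\to\infty$ for every $p\geqslant 0$, disposing of all $p$ at once.

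The substantive work is the uniform bound \eqref{equ:pro1}. For each $p\in\{0,1,2,3\}$ I would locate the global maximum of $|u_p'|$ by solving $\frac{d}{d\nu}u_p'(\mu)=0$. The critical equation reduces to $\nu=0$ (a minimum) or a biquadratic $a\nu^4+b\nu^2+c=0$ whose positive roots $\nu^2=(-b\pm\sqrt{b^2-4ac})/(2a)$ are available in closed form. Substituting those roots back into the rescaled expression, together with the boundary values at $\nu=0$ and $|\nu|\to\infty$, yields a short finite list of candidate extrema whose direct evaluation confirms $|u_p'|\leqslant 1$. The tightest case is $p=3$, where the interior maximum is approximately $0.92$, so the arithmetic must be carried out exactly rather than through lossy bounds. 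In particular, the naive AM--GM estimate $\tilde{s}^{-1}+\tilde{s}\mu^2\geqslant 2|\mu|$ only yields $|u_p'|\leqslant(p+2)/2$, which already fails at $p=1$.

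The main obstacle is therefore this critical-point analysis: one cannot avoid a tight, root-by-root verification for the four values $p=0,1,2,3$, and the restriction to $p\leqslant 3$ in \eqref{equ:pro1} is genuine rather than an artifact of the proof (the factor $(-2)^p$ makes the interior maximum of $|u_p'|$ grow with $p$ once $p\geqslant 4$). In writing up I would tabulate the critical values of $\nu^2$ and the corresponding $|u_p'|$ for $p=0,1,2,3$ rather than carrying full algebraic expressions through, so that the numerical verification remains transparent and reproducible.
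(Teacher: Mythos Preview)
Your derivative and rescaling $\nu=\tilde s\mu$ are correct and coincide with the paper's formula (the paper writes everything in terms of $\mu\tilde s$, which is your $\nu$). For item~2 the degree-counting argument is essentially the same as the paper's.

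For item~1 the routes diverge. You propose a case-by-case critical-point analysis: solve a biquadratic for each $p\in\{0,1,2,3\}$, tabulate the extremal values, and verify numerically that the worst case ($p=3$, roughly $0.92$) stays below $1$. The paper instead proves the single algebraic inequality
\[
2^{p}\bigl[(p+2)\nu^{-1}-p\,\nu\bigr]\ \leqslant\ (\nu^{-1}+\nu)^{p+2}
\]
by expanding the right-hand side binomially about the value $2$, namely $(\nu^{-1}+\nu)^{p+2}=\sum_{i=0}^{p+2}\binom{p+2}{i}2^{\,p+2-i}(\nu^{-1}+\nu-2)^{i}$, keeping only the $i=0,1$ terms (all terms are nonnegative since $\nu^{-1}+\nu\geqslant2$), and then applying AM--GM to the resulting linear expression in $\nu$ and $\nu^{-1}$. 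That reduces all four cases to the single inequality $2(p+1)\leqslant\sqrt{(3p+4)(p+2)}$, which holds exactly for $p\leqslant 3$. So your remark that ``naive AM--GM only yields $(p+2)/2$'' is true, but a better-placed AM--GM after the binomial trick does close the gap without any numerics.

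Both arguments are valid. Yours is conceptually straightforward and makes the tightness at $p=3$ visible, at the cost of four separate optimizations and some exact arithmetic. The paper's argument is a one-line inequality that handles $p=0,1,2,3$ uniformly and explains analytically why the cutoff is at $p=3$, but it is less obvious to discover.
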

\begin{proof}
See Appendix \ref{app:up}.
\end{proof}

Based on Lemma \ref{lemma:lemma_gradient_up}, when $ \Delta\omega\geqslant\Omega $, we approximate $ u_p $ by the constant $ {u}_p(\bar{\mu}) $, where $ \bar{\mu} \equiv\mu(\Omega)=1+\frac{\Omega}{\omega_0} $.
Since $ |s_0^k|<1 $, we ignore the high-order terms of $ (s_0^k)^p $ and approximate the infinite summation in \eqref{equ:approx_G11_Taylor} by its $ p=0,1,2,3 $ terms as
\begin{align}
\label{equ:G_11_recast_g}
[\mathcal{G}_1]_{1,1}&\approx|s_1|^2\sum_{k=2}^K\frac{(\gamma_M^k)^2}{|\mathcal{N}_k|}\left(1-\tilde{s}(1-(s_0^k)^2)\sum_{p=0}^{3}\bar{u}_p(s_0^k)^p\right) \nonumber \\
&=\sum_{k=2}^K\frac{(\gamma_M^k)^2}{|\mathcal{N}_k|}\left(\sum_{p=0}^{5}v_p(s_0^k)^p\right),
\end{align}
where $ v_p/|s_1|^2 $ for $ p=0,1,\dots,5 $ are given by $ 1-\tilde{s}\bar{u}_0 $, $-\tilde{s}\bar{u}_1$, $-\tilde{s}\bar{u}_2+\tilde{s}\bar{u}_0$, $-\tilde{s}\bar{u}_3+\tilde{s}\bar{u}_1$, $\tilde{s}\bar{u}_2$, $\tilde{s}\bar{u}_3 $, respectively.
Note that $ s_0^k $ can be recast as
\begin{align}
\label{equ:QokReIm}
s_0^k=\frac{1}{2|s_1s_2|}\left(s_1^Hs_2 Q_0^k+(s_1^Hs_2 Q_0^k)^H\right).
\end{align}
Substituting \eqref{equ:QokReIm} to \eqref{equ:G_11_recast_g} yields 
\begin{align}
\label{equ:G11final}
[\mathcal{G}_1]_{1,1}\approx\sum_{k=2}^K |\mathcal{N}_k| \cdot\tilde{f}(\bar{\varphi}_k)\cdot \tilde{g}(Q_0^k),
\end{align}
where $ \tilde{f}(x)=(x-\bar{\varphi})^2 $ and $ \tilde{g}(x) $ is a polynomial function w.r.t. $ x^{-p} $ and $ x^p $ for $ p=0,1,\dots,5 $, the specific form of which is omitted.
Based on \eqref{equ:average} in Assumption A3, 
we have 
\begin{align}
\label{equ:}
[\mathcal{G}_1]_{1,1}&\approx\sum_{k=2}^K \sum_{n\in\mathcal{N}_k} \tilde{f}(\varphi_n)\cdot \tilde{g}(e^{j\Delta\omega\varphi_n}) \nonumber \\
&\approx\sum_{n=1}^N \tilde{f}(\varphi_n)\cdot \tilde{g}(e^{j\Delta\omega\varphi_n}),
\end{align}
which can be directly recast as a polynomial function of $ Q_i(p\Delta\omega) $ for $ i=0,1,2 $ and $ p=-5,-4,\dots,5 $, given by
\begin{align}
\label{equ:G1_11_Q}
[\mathcal{G}_1]_{1,1}\approx\sum_{i=0}^2\sum_{p=-5}^5 c_p^i\cdot Q_i(p\Delta\omega),
\end{align}
where $ c_p^i $ is a constant unrelated to $ \Delta\omega $, completing the proof of the part of $ \mathcal{G}_1 $.

For $ [\mathcal{G}_2]_{1,1} $, substituting \eqref{equ:G-1} to $ \mathcal{G}_2 $ yields
\begin{align}
\label{equ:G2_11}
[\mathcal{G}_2]_{1,1}&=\sum_{k_1=2}^K\sum_{k_2=2}^K\frac{[\widehat{\bm{M}}]_{1,k_1-1}[\widehat{\bm{M}}]_{1,k_2-1}}{N[\bm{h}]_{k_1-1}[\bm{h}]_{k_2-1}} \nonumber \\
&\cdot\Big(\sum_{l=1,2}{\rm Re}\left\{[\bm{A}^H\bm{H}]_{l,k_1-1}^H[\bm{A}^H\bm{H}]_{l,k_2-1}\right\}\Big).
\end{align}
Due to $ {\rm Re}(x^Hy)={\rm Re}(x){\rm Re}(y)+{\rm Im}(x){\rm Im}(y) $ for any $ x,y\in\mathbb{C} $, \eqref{equ:G2_11} can be recast as
\begin{align}
\label{equ:mathcalG211approx}
[\mathcal{G}_2]_{1,1}&=\sum_{l=1,2}\left(\sum_{k=2}^K\frac{[\widehat{\bm{M}}]_{1,k-1}{\rm Re}\left\{[\bm{A}^H\bm{H}]_{l,k-1}\right\}}{\sqrt{N}[\bm{h}]_{k-1}}\right)^2 \nonumber \\
&+\sum_{l=1,2}\left(\sum_{k=2}^K\frac{[\widehat{\bm{M}}]_{1,k-1}{\rm Im}\left\{[\bm{A}^H\bm{H}]_{l,k-1}\right\}}{\sqrt{N}[\bm{h}]_{k-1}}\right)^2.
\end{align}
Similar as $ [\mathcal{G}_1]_{1,1} $ in \eqref{equ:approx_G11_Taylor}, we substitute \eqref{equ:H} and \eqref{equ:Mrecast} to \eqref{equ:mathcalG211approx} and have
\begin{align}
\label{equ:G2_11_recast}
[\mathcal{G}_2]_{1,1}=\frac{|s_1|^2}{4N}\sum_{r=1}^4\left(\sum_{K=2}^K\gamma_M^k\sum_{p=0}^{\infty}\frac{u_p\cdot(s_0^k)^p\cdot \bar{g}_r}{\mu^2}\right)^2,
\end{align}
where $ \bar{g}_r $ for $ r=1,\dots,4 $ are expressed as $ (s_1-s_1^H)\pm\mu(s_2Q_0^k-(s_2Q_0^k)^H) $ and $ (s_1(Q_0^k)^H+s_2\mu)\pm(s_1^HQ_0^k+s_2^H\mu) $.
Similar to the approximation on $ [\mathcal{G}_1]_{1,1} $, we apply the procedures from \eqref{equ:G_11_recast_g} to \eqref{equ:G1_11_Q} to $ [\mathcal{G}_2]_{1,1} $ in \eqref{equ:mathcalG211approx} and have
\begin{align}
\label{equ:G2_11_Q}
[\mathcal{G}_2]_{1,1}\approx\sum_{r=1}^4\left(\sum_{i=0}^2\sum_{p=-5}^{5} d_{r,p}^i\cdot Q_i(p\Delta\omega)\right)^2,
\end{align}
where $ d_{r,p}^i $ is a constant unrelated to $ \Delta\omega $, completing the proof of the part of $ \mathcal{G}_2 $. 

By combining the proofs on $ \mathcal{G}_1 $ and $ \mathcal{G}_2 $, we complete the proof of this lemma.

\section{Proof pf Lemma \ref{lemma:FC}}
\label{app:proof_of_prop3}



We rewrite $ {\rm CRB}_{\rm FC}(\Delta\omega) $ in \eqref{equ:CRB_FC} as a function of $ Q(\Delta\omega) $.
Particularly, through substituting $ \bm{A}=[\bm{a}(\omega_1),\dots,\bm{a}(\omega_L)] $ and $ [\bm{a}(\omega_l)]_n=e^{j\omega_l\varphi_n} $ to $ \bm{\Pi}_{\bm{A}}^{\perp} $ in \eqref{equ:PiA}, we have
\begin{align}
\label{equ:piAQ}
\bm{\Pi}_{\bm{A}}^{\perp}&=\bm{I}-\frac{\bm{A}(\bm{I}-\bm{C}_Q)\bm{A}^H}{N(1-|Q_0|^2)},
\end{align}
where $ \bm{C}_Q $ is denoted by 
\begin{align}
\label{equ:C_Q}
\bm{C}_Q=\begin{bmatrix}
0 & Q_0 \\
Q_0^H & 0
\end{bmatrix}.
\end{align}
Through substituting $ \bm{D} $ in \eqref{equ:D} and $ \bm{\Pi}_{\bm{A}}^{\perp} $ in \eqref{equ:piAQ} to $ \bm{F}={\rm Re}\left\{\bm{D}^H\bm{\Pi}_{\bm{A}}^{\perp}\bm{D}\right\} $ in \eqref{equ:F}, we have 
\begin{align}
\label{equ:FQexp2}
\bm{F}=&{\rm Re}\left\{(\bm{F}_1+\bm{F}_2+\bm{F}_3)\odot\bm{S}\right\},
\end{align}
where 
\begin{align}
\label{equ:F1}
\bm{F}_1&=\sum_{n=1}^N\varphi_n^2\cdot\begin{bmatrix}
1 & Q_2 \\
Q_2^H & 1
\end{bmatrix}, \\
\label{equ:F2}
\bm{F}_2&=-\gamma_0\cdot\begin{bmatrix}
1+|Q_1|^2 & 2Q_1 \\
2Q_1^H & 1+|Q_1|^2
\end{bmatrix}, \\
\label{equ:F3}
\bm{F}_3&=\gamma_0\cdot\begin{bmatrix}
Q_0Q_1^H+Q_0^HQ_1 & Q_0^HQ_1Q_1+Q_0 \\
Q_0Q_1^HQ_1^H+Q_0^H & Q_0Q_1^H+Q_0^HQ_1
\end{bmatrix},
\end{align}
where $ \gamma_0=\frac{(\sum_{n=1}^N\varphi_n)^2}{N(1-|Q_0|^2)} $ and $ \bm{S} $ is defined as
\begin{align}
\label{equ:S}
\bm{S}\equiv
\begin{bmatrix}
|s_1|^2 & s_1^Hs_2 \\
s_2^Hs_1 & |s_2|^2
\end{bmatrix},
\end{align}
completing the proof.

\section{Proof of Lemma \ref{lemma:lemmaBB}}
\label{app:BB}

The following equation is satisfied, which can be directly verified by matrix multiplication:
\begin{align}
\label{equ:matrixinversion_approx}
\widetilde{\bm{B}}\widehat{\bm{B}}=\bm{I}+\bar{\bm{B}},
\end{align}
where
\begin{align}
\label{equ:Btilde}
\widetilde{\bm{B}}&=\begin{bmatrix}
B_2 & & 0 \\
 & \ddots & \\
0 & & B_k
\end{bmatrix}-
{\rm Re}\left\{\begin{bmatrix}
\bm{b}_2^H \\
\vdots \\
\bm{b}_K^H
\end{bmatrix}
\begin{bmatrix}
\bm{b}_2 & \cdots & \bm{b}_K
\end{bmatrix}\right\},
\end{align}
\begin{align}
\label{equ:Bhat}
\widehat{\bm{B}}&=\begin{bmatrix}
\frac{1}{B_2} & & 0 \\
 & \ddots & \\
0 & & \frac{1}{B_k}
\end{bmatrix}+ {\rm Re}\left\{
\begin{bmatrix}
\frac{\bm{b}_2^H}{B_2} \\
\vdots \\
\frac{\bm{b}_K^H}{B_K}
\end{bmatrix}
\begin{bmatrix}
\frac{\bm{b}_2}{B_2} & \cdots & \frac{\bm{b}_K}{B_K}
\end{bmatrix}\right\},
\end{align}
where $ B_2,\dots,B_K>0 $, and the matrix $ [\bar{\bm{B}}]_{k_1-1,k_2-1}=-\sum_{k=2}^K \frac{{\rm Re}\{\bm{b}_{k_1}^H\bm{b}_k\}{\rm Re}\{\bm{b}_k^H\bm{b}_{k_2}\}}{B_kB_{k_2}} $ for $ k_1,k_2=2,\dots,K $.

When $ B_k=[\bm{h}]_{k-1} $ and $ \bm{b}_k=\frac{1}{\sqrt{N}}[\bm{A}^H\bm{H}]_{k-1} $, i.e., $ \widetilde{\bm{B}}=\widetilde{\bm{G}} $, we have
\begin{align}
\label{equ:barBsat}
\underset{K\rightarrow\infty}{\lim}\  [\bar{\bm{B}}]_{k_1-1,k_2-1} = 0,
\end{align}
yielding $ \bar{\bm{B}}\approx\bm{0} $ when $ K $ is large enough.
The reason is shown as follows: Based on the A2 assumption, we have $ |\mathcal{N}_k|/N=1/K $ and $ K $ is large ($ 1/K\approx 0 $).
By substituting \eqref{equ:H} to $ \bar{\bm{B}} $, we have
\begin{align}
\label{equ:Bkk_approx}
[\bar{\bm{B}}]_{k_1-1,k_2-1}&=\sum_{k=2}^K\frac{|\mathcal{N}_{k_1}||\mathcal{N}_{k}|^2|\mathcal{N}_{k_2}|}{N^2|\mathcal{N}_{k}||\mathcal{N}_{k_2}|}\cdot f_{k_1,k_2}(\bm{s},\bm{\omega},Q_0^k) \nonumber \\
&\approx\frac{|\mathcal{N}_{k_1}|}{N}\cdot f_{k_1,k_2}(\bm{s},\bm{\omega},Q_0^k) \nonumber \\
&=\frac{1}{K}\cdot f_{k_1,k_2}(\bm{s},\bm{\omega},Q_0^k),
\end{align}
where $ f_{k_1,k_2}(\bm{s},\bm{\omega},Q_0^k) $ denotes a function of $ \bm{s} $, $ \bm{\omega} $ and $ Q_0^k $ for $ k_1,k_2=2,\dots,K $, and the approximation in \eqref{equ:Bkk_approx} is derived from $ \sum_{k=2}^K|\mathcal{N}_k|=\frac{K-1}{K}N\approx N $.
When $ 1/K\approx 0 $, we have the conclusion in \eqref{equ:barBsat}.

Based on Assumption A2, we have $ \bar{\bm{B}}\approx\bm{0} $, which means
\begin{align}
\label{equ:BB=I}
\widetilde{\bm{B}}\widehat{\bm{B}}\approx\bm{I},
\end{align}
yielding $ \widetilde{\bm{G}}^{-1}\approx\widehat{\bm{B}} $ in \eqref{equ:Bhat} with $ B_k=[\bm{h}]_{k-1} $ and $ \bm{b}_k=\frac{1}{\sqrt{N}}[\bm{A}^H\bm{H}]_{k-1} $, completing the proof. 

\section{Proof of Lemma \ref{lemma:lemma_gradient_up}}
\label{app:up}

The gradient of $ u_p $ w.r.t. $ \mu $ is calculated as 
\begin{align}
\label{equ:up_gradient_cal}
u_p'=\frac{{\rm d}u_p}{{\rm d}\mu}=(-2)^p\cdot\frac{(p+2)(\mu\tilde{s})^{-1}-p\mu\tilde{s}}{\left((\mu\tilde{s})^{-1}+\mu\tilde{s}\right)^{p+2}}.
\end{align}
Since $ u_p $ is either odd or even function of $ \mu $, we assume $ \mu>0 $ without loss of generality.

We prove the 1-st item in Lemma \ref{lemma:lemma_gradient_up}.
Based on \eqref{equ:up_gradient_cal}, \eqref{equ:pro1} is equivalent to  
\begin{align}
\label{equ:twoexp}
2^p\left((p+2)(\mu\tilde{s})^{-1}-p\mu\tilde{s}\right)\leqslant \left((\mu\tilde{s})^{-1}+\mu\tilde{s}\right)^{p+2}.
\end{align}
Apply binomial expansion to the right part of \eqref{equ:twoexp}, yielding 
\begin{align}
\label{equ:rightpart}
\big((\mu\tilde{s})^{-1}+\mu\tilde{s}\big)^{p+2}=\sum_{i=0}^{p+2}B_i(\mu),
\end{align}
where 
\begin{align}
\label{equ:U}
B_i(\mu)=\binom{p+2}{i} 2^{p+2-i}\left((\mu\tilde{s})^{-1}+\mu\tilde{s}-2\right)^i.
\end{align}
Since $ (\mu\tilde{s})^{-1}+\mu\tilde{s}-2\geqslant 0 $, $ B_i(\mu)\geqslant 0 $ for $ i=0,\dots,p+2 $.
Then, we prove that the left part of \eqref{equ:twoexp} is less than $ B_0(\mu)+B_1(\mu) $ when $ p\geqslant3 $, yielding the inequality of \eqref{equ:twoexp}.
Particularly, we have
\begin{align}
\label{equ:minus}
2^p&\left((p+2)(\mu\tilde{s})^{-1}-p\mu\tilde{s}\right)-B_0(\mu)-B_1(\mu) \nonumber \\
&=2^p\left(4(p+1)-(3p+4)\tilde{s}\mu-(p+2)(\tilde{s}\mu)^{-1}\right) \nonumber \\
&\leqslant 2^p\left(4(p+1)-2\sqrt{(3p+4)(p+2)}\right)\leqslant 0,
\end{align}
for $ p\geqslant3, p\in\mathbb{N} $, completing the proof of the 1-st item in Lemma~\ref{lemma:lemma_gradient_up}.

For the 2-nd item in Lemma~\ref{lemma:lemma_gradient_up}, based on \eqref{equ:up_gradient_cal}, we have
\begin{align}
\label{equ:up_pro2}
0\leqslant\underset{|\mu|\rightarrow\infty}{\rm Lim} |u_p'|&\leqslant\underset{|\mu|\rightarrow\infty}{\rm Lim}\frac{2^p(p+2)(\mu\tilde{s})^{-1}}{(\mu\tilde{s})^{p+2}} \nonumber \\
&=\underset{|\mu|\rightarrow\infty}{\rm Lim}\frac{2^p(p+2)}{(\mu\tilde{s})^{p+3}}=0,
\end{align}
completing the proof of the 2-nd item in Lemma~\ref{lemma:lemma_gradient_up}.

\ifCLASSOPTIONcaptionsoff
  \newpage
\fi

\bibliography{reference}
\bibliographystyle{IEEEtran}

\end{document}